\documentclass[11pt,letterpaper]{article}
\usepackage[utf8]{inputenc}

\pdfoutput=1

\usepackage[margin=1in]{geometry}
\usepackage{amsmath}
\usepackage{palatino}
\usepackage{macros}
\usepackage{latexsym}

\usepackage{tikz}
\usetikzlibrary{arrows,positioning, shapes.geometric}

\usepackage{enumitem}

\usepackage{thmtools} 
\usepackage{thm-restate}

\allowdisplaybreaks
        
\bibliography{refs}

\newif\ifauthors
\authorstrue

\newcommand{\dimension}{d}

\newcommand{\declareperson}[1]{\expandafter\newcommand\csname#1\endcsname[1]{\textcolor{orange}{#1: ##1}}}
\DeclareMathOperator{\NBC}{BC}

\ifauthors
\declareperson{Shayan}
\declareperson{Dorna}
\declareperson{Kasper}
\def\ord{{\mathcal{O}}}
\usepackage{authblk}

 \author{Dorna Abdolazimi\thanks{\href{mailto:dornaa@cs.washington.edu}{dornaa@cs.washington.edu}. Research supported by NSF grant  CCF-2203541, and Air Force Office of Scientific Research grant FA9550-20-1-0212.}}
 \author{Kasper Lindberg \thanks{\href{mailto:whet@cs.washington.edu}{whet@cs.washington.edu}. Research supported by NSF grant CCF-2203541.}} 
\author{Shayan Oveis Gharan\thanks{\href{mailto:shayan@cs.washington.edu}{shayan@cs.washington.edu}. Research supported by NSF grant CCF-2203541, Air Force Office of Scientific Research grant FA9550-20-1-0212, and Simons Investigator grant.}}%
\affil{University of Washington}
\fi

\title{On Optimization and Counting of\\ Non-Broken Bases of Matroids}

\begin{document}
\maketitle
\begin{abstract}
    Given a matroid $M=(E,{\cal I})$, and a total ordering over the elements $E$, a broken circuit is a circuit where the smallest element is removed and an NBC independent set is an independent set in ${\cal I}$ with no broken circuit. The set of NBC independent sets of any matroid $M$ define a simplicial complex called the broken circuit complex which has been the subject of intense study in combinatorics. Recently, Adiprasito, Huh and Katz showed that the face of numbers of any broken circuit complex form a log-concave sequence, proving a long-standing conjecture of Rota.

    We study counting and optimization problems on NBC bases of a generic matroid. We find several fundamental differences with the independent set complex: for example, we show that it is NP-hard to find the max-weight NBC base of a matroid or that the convex hull of NBC bases of a matroid has edges of arbitrary large length.
    We also give evidence that the natural down-up walk on the space of NBC bases of a matroid may not mix rapidly by showing that for some family of matroids it is NP-hard to count the number of NBC bases after certain conditionings. 
\end{abstract}    
\section{Introduction}
A matroid $M = (E, \mathcal{I})$ is consists of a finite ground set $E$ and a collection $\mathcal{I}$ of subsets of $E$, called independent sets, satisfying:
\begin{description}
    \item [Downward closure:] If $S \subseteq T$ and $T \in \mathcal{I}$, then $S \in \mathcal{I}$.
    \item [Exchange axiom:] If $S, T \in \mathcal{I}$ and $|T| > |S|$, then there exists an element $i \in T \setminus S$ such that $S \cup \{i\} \in \mathcal{I}$.
\end{description}
The rank of a set $S \subseteq E$ is the size of the largest independent set contained in $S$. All maximal independent sets of $M$, called the bases of $M$, have the same size $r$, which is called the rank of $M$. 

Sampling and counting problems on matroids have captured the interest of many researchers for several decades with applications to reliability \cite{CP89}, liquidity of markets \cite{Ramseyer_2020}, etc. A recent breakthrough in this field proved that the down-up walk on the bases of a matroid mixes rapidly to the (uniform) stationary distribution and can be used to count the number of bases of a matroid \cite{ALOV19, cryan2020modified}, resolving the conjecture of Mihail and Vazirani from 1989 \cite{MV89}. 
The down-up walk is easy to describe: Start with an arbitrary base $B$ and repeatedly execute the following two steps:
\begin{enumerate}
\item Choose a uniformly random element $i\in B$ and delete it.
\item Among all bases (of $M$) that contain $B\smallsetminus \{i\}$, choose one uniformly at random.
\end{enumerate}

A central question that has puzzled researchers since then is sampling a non-broken (circuit) basis (NBC basis) of a matroid \cite{benson2010gparking}.
A set  $C \subseteq E$ is a {\it circuit} iff  $C \setminus\{e\} \in \mathcal{I}$ for any $e \in C$. 
A {\it broken circuit} (with respect to a total ordering $\ord$) is a set $C \setminus \{e\}$, where $C \subseteq E$ is a circuit and $e$ is the {\bf smallest} element of $C$ with respect $\ord$. An independent set $S \subseteq E$ is a {\it non-broken}  independent set (NBC independent set) if it contains no broken circuits. The  NBC independent sets are closely related to several interesting combinatorial objects. The number of NBC independent sets of size $k$ in a graphic matroid is equal to the absolute value of the $(n-1) -k$-th coefficient of the chromatic polynomial of the underlying graph where $n$ is the number of vertices. As a corollary the following facts hold:

\begin{fact}\label{fact:NBCcons}The following facts are well-known about the counts of NBC bases/independent sets of different family of matroids. 
\begin{itemize}
\item The number of all NBC independent sets of a graphic matroid is equal to the the  number of acyclic orientations of the graph \cite{STANLEY1973171}.
\item The number of all NBC independent sets of a co-graphic matroid is equal to the number of strongly connected orientations of the graph (see e.g., \cite{GIOAN2019165}).
\item The number of non-broken spanning trees of a graph is equal to the number of parking functions with respect to a unique source vertex \cite{benson2010gparking}
\item The number of NBC independent of sets of linear matroid with vectors $v_1,\dots,v_n$ is equal to the number of regions defined by the intersection of the orthogonal hyperplanes (see e.g., \cite{Stanley2007AnIT}).
\end{itemize}
\end{fact}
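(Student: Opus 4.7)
The plan is to deduce all four items from a single master identity, Whitney's NBC theorem: for any loopless matroid $M$ of rank $r$ with a total order $\ord$ on its ground set,
\[
\chi_M(t) \;=\; \sum_{k=0}^{r} (-1)^k\, f_k(M)\, t^{\,r-k},
\]
where $\chi_M$ is the characteristic polynomial and $f_k(M)$ counts NBC independent sets of size $k$. In particular, the total number of NBC independent sets equals $|\chi_M(-1)|$. I would prove this identity by induction on $|E|$, performing deletion-contraction on the smallest element $e$ of $\ord$: an NBC set either avoids $e$ (an NBC set of $M\setminus e$) or contains $e$ (an NBC set of $M/e$, with no broken circuit created because $e$ is minimal in every circuit through it). This exactly matches the matroid recursion $\chi_M = \chi_{M\setminus e} - \chi_{M/e}$, after accounting for rank drops on contraction.

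With this identity in hand, items 1, 2, and 4 each reduce to a classical evaluation of $|\chi(-1)|$ for the associated matroid. For item 1 (graphic), I would invoke Stanley's theorem that $|\chi_G(-1)|$ equals the number of acyclic orientations of $G$, itself proved by a short deletion-contraction on any edge. For item 2 (cographic), I would use matroid duality to identify $|\chi_{M^*}(-1)|$ with a signed count of nowhere-zero integer flows, then apply the standard bijection (via cycle reversal along cocycles of negative flow) between nowhere-zero flows and strongly connected orientations. For item 4 (linear), Zaslavsky's theorem directly yields that the number of regions of the central arrangement $\{v_i^\perp\}$ equals $|\chi_{\mathcal{A}}(-1)|$, and the characteristic polynomial of the arrangement coincides (up to a factor of $t$ accounting for the common zero) with that of the linear matroid on $v_1,\dots,v_n$.

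Item 3 will be the main obstacle, because it asserts an equality of two combinatorially defined sets rather than a two-sided evaluation of $|\chi(-1)|$. My plan is to construct an explicit bijection between NBC spanning trees of $G$ (with the edge order induced by a vertex order in which the source $v_0$ is smallest) and $G$-parking functions with respect to $v_0$, following Cori--Le~Borgne. Given an NBC spanning tree $T$, the bijection would define the parking function via a canonical \emph{burning} procedure: peel vertices of $G$ one at a time in an order dictated by $T$, and record for each vertex $v \neq v_0$ the count of non-tree edges from $v$ to previously-burned vertices. The hard part will be showing that the NBC condition on $T$ corresponds exactly to the validity condition for $G$-parking functions; this reduces to matching the ``smallest edge of each fundamental circuit'' condition with the local deficiency inequalities that characterize parking functions, and will require a careful analysis of how circuits are resolved during burning.
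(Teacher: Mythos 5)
The paper does not prove this statement; it is presented as a list of known facts, each supported only by a citation, so there is no internal argument to compare your sketch against. Evaluating your plan on its own merits: the master identity $|\chi_M(-1)| = \sum_k f_k(M)$ is the right unifying lens, and items 1 and 4 (Stanley's acyclic-orientation theorem and Zaslavsky's region count, keeping track of the $t^{\dim - r}$ factor for a non-essential central arrangement) are handled correctly. For item 2 a more direct path than passing through nowhere-zero flows is $|\chi_{M^*(G)}(-1)| = T_{M^*}(2,0) = T_G(0,2)$, which by Las Vergnas and Greene--Zaslavsky counts totally cyclic orientations, i.e.\ strongly connected orientations once $G$ is connected and bridgeless (an implicit hypothesis of the statement); the flow-polynomial detour is awkward because there is no honest set of ``flows at $-1$'' to biject from. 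Item 3 is genuinely the hard one, and your Cori--Le~Borgne burning plan is the right shape, though it is still only a plan.

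The genuine gap is in your proof of the master identity itself. Deletion-contraction on the \emph{smallest} element $e$ does not give the clean split you claim. Take $M = U_{2,3}$ on $\{1<2<3\}$ with $e=1$; the unique broken circuit of $M$ is $\{2,3\}$. The set $\{2,3\}$ is NBC in $M\setminus 1 \cong U_{2,2}$ (which has no circuits at all) but it \emph{is} the broken circuit of $M$, so ``NBC sets of $M$ avoiding $e$'' is a strict subset of ``NBC sets of $M\setminus e$.'' Symmetrically, $\{1,3\}$ is NBC in $M$, yet stripping $e$ gives $\{3\}$, which \emph{is} a broken circuit of $M/1 \cong U_{1,2}$, so the contraction side fails too. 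The intuition that ``no broken circuit is created because $e$ is minimal in every circuit through it'' is backwards: minimality of $e$ in $C$ means the broken circuit $C\setminus\{e\}$ lives entirely in $E\setminus\{e\}$, which is precisely what breaks both correspondences. The clean recursion deletes and contracts the \emph{largest} element of $\ord$ (then every broken circuit coming from a circuit through $e$ still contains $e$, so both sides match up, as one can check on the same example), or else one proves Whitney's theorem by the standard broken-circuit sieve and a sign-reversing involution rather than by deletion-contraction.
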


We emphasize that although the set of NBC independent sets/bases of a matroid are functions of the underlying total order $\ord$, the counts of the number NBC independent sets of rank $k$ for any $0\leq k\leq r$ are invariant under $\ord$ \cite{Stanley2007AnIT}. We remark that, to the best of our knowledge as of this date, none of the above counting problems are known to be computationally tractable.

Given a matroid $M$ with an arbitrary total ordering $\ord$, one can analogously run the down-up walk only on the NBC bases of $M$. It is not hard to see that this chain is irreducible and converges to the uniform stationary distribution. Following the work of \cite{ALOV19} it was conjectured that the down-up walk on the NBC bases of any matroid mixes rapidly \footnote{In fact, this conjecture was raised an an open problem in several recent workshops \href{https://sites.cs.ucsb.edu/~vigoda/School/}{UC Santa Barbara workshop on New tools for Optimal Mixing of Markov Chains: Spectral Independence and Entropy Decay}, and \href{https://simons.berkeley.edu/programs/geometry-polynomials/}{Simon's workshop on Geometry of Polynomials}}.

\begin{conjecture}\label{conj:mixingdownupNBC}
For any matroid $M$, and any total ordering $\ord$ of the elements of $M$, the down-up walk on the NBC bases of a matroid mixes in polynomial time.
\end{conjecture}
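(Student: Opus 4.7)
The natural plan is to attack the conjecture through the spectral/local-to-global framework of \cite{ALOV19}. Concretely, one would view the set of NBC bases as the top faces of the broken-circuit complex $\Delta_{\mathrm{NBC}}(M,\ord)$, and try to show that this simplicial complex is a $0$-local spectral expander (equivalently, that the $1$-skeleton of every link of every face has second eigenvalue $\le 0$, or that its basis generating polynomial is completely log-concave). This would immediately yield that the NBC down-up walk contracts entropy at the optimal rate and therefore mixes in time polynomial in the rank and $\log|E|$.

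The first step would be to collect the structural tools we already have for $\Delta_{\mathrm{NBC}}$: by Adiprasito--Huh--Katz the face numbers of the broken-circuit complex are log-concave, and Björner's classical shellability of $\Delta_{\mathrm{NBC}}$ provides a convenient linear order on the bases with respect to which one can perform an exchange analysis. I would then attempt to verify a local exchange property: for every NBC independent set $S$ and every pair $i,j \notin S$ with $S\cup\{i\},S\cup\{j\}$ both NBC bases of the contracted/deleted matroid (restricted to the link), the ``swap'' probabilities at the top level behave like those in a matroid. If this holds, one could try to imitate ALOV's inductive proof: reduce local spectral expansion at a face $F$ to total log-concavity of the generating polynomial of the NBC bases extending $F$, and prove the latter by writing it as a nonnegative combination of stable/Lorentzian polynomials built from characteristic polynomial coefficients (which are log-concave by AHK).

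The most promising alternative, should the purely local approach fail, is a comparison argument against the unrestricted matroid down-up walk, which is already known to mix rapidly. One would build canonical flows from any NBC base to any other NBC base that route through the larger state space of all bases but stay within NBC at the endpoints, and bound the congestion of the flow by the ratio between the number of NBC bases and the number of all bases at each intermediate cut; invariance of the NBC count under $\ord$ and Fact~\ref{fact:NBCcons} give handles for estimating these ratios in the graphic and linear cases.

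The main obstacle, and the reason I expect this plan to be very hard (and indeed the reason the paper suggests Conjecture~\ref{conj:mixingdownupNBC} may fail), is that the link of a face in $\Delta_{\mathrm{NBC}}$ is \emph{not} the NBC complex of a smaller matroid, nor need it be matroidal at all: contracting an element can turn previously non-broken circuits into broken ones and vice versa, so the induction that drives the matroid proof does not close. Equivalently, the basis generating polynomial of an NBC complex need not be Lorentzian, so one cannot plug into the AHK/ALOV machinery as a black box. Overcoming this would seem to require either a new class of polynomials between Lorentzian and stable that is closed under the NBC link operation, or a combinatorial coupling that exploits the acyclic-orientation / parking-function bijections in Fact~\ref{fact:NBCcons} to bypass local spectral expansion altogether.
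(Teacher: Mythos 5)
You were asked to prove Conjecture~\ref{conj:mixingdownupNBC}, but this is an \emph{open conjecture}: the paper offers no proof of it, and in fact its main results are negative evidence for the natural proof strategies. So there is no proof in the paper to compare yours against, and your write-up is (correctly, and candidly) not a proof either --- it is a survey of approaches together with an explanation of why they appear to break.

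That said, your obstacle analysis is well aligned with what the paper actually establishes. The central difficulty you name --- that links of the broken-circuit complex are not themselves NBC complexes of smaller matroids, so the ALOV-style induction does not close and the Lorentzian/log-concavity machinery cannot be invoked as a black box --- is exactly what Theorem~\ref{thm:nbc-contract-down-up} makes rigorous: for truncated graphic matroids there are faces $\tau$ of the NBC complex whose link has a down-up walk with spectral gap $n^{-\Omega(n)}$, so Conjecture~\ref{conj:BCClocalwalks} (the local-to-global route) is false. Theorem~\ref{thm:nbc-contract-count} goes further: approximately counting NBC bases that contain a given set $S$ is NP-hard, which is precisely the conditional counting problem that both your local-expansion plan and your canonical-flow/comparison plan would need to control at intermediate stages. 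Your shellability remark is not developed in the paper, and your proposed comparison argument against the unrestricted matroid down-up walk is not explored there either; if you wanted to push on the conjecture this is arguably the least-ruled-out direction, but note that bounding congestion would still require quantitative control over exactly the kind of ``NBC bases extending a fixed partial base'' quantities that Theorem~\ref{thm:nbc-contract-count} shows are hard to estimate in general. In short: you correctly diagnosed the landscape, but you did not (and, given the state of the art, could not) produce a proof, and the paper does not contain one either.
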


It turns out that the above conjecture, if true, would be give a generalization of the result of \cite{ALOV19}, because of the following fact.

\begin{fact}[\cite{Brylawski77}]\label{fact:nbctomatroid}
    For any matroid $M$ one can construct another matroid $M'$ with an ordering $\ord$ with only one extra element such that there is a bijection between bases of $M$ and non-broken bases of $M'$.
\end{fact}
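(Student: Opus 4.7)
The plan is to take $M'$ to be the \emph{free coextension} of $M$, defined via duality as $M' := (M^{\ast} + e_0)^{\ast}$, where $N + e_0$ denotes the free extension of a matroid $N$ by a new element $e_0$ (placed in general position in the span of the ground set of $N$). I order the elements of $M'$ so that $e_0$ is the smallest; note that $M'$ has rank $r+1$.

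The first step is a structural description of $M'$. A set $T \subseteq E \cup \{e_0\}$ is independent in $M'$ iff its complement in $E\cup\{e_0\}$ spans $M^{\ast} + e_0$, which via the duality formula $r_{M^{\ast}}(A) = |A| + r_M(E \setminus A) - r$ translates into concrete conditions on $M$. Unpacking this shows that the bases of $M'$ fall into exactly two families: \textbf{(Type 1)} $\{e_0\} \cup B$ for $B$ a basis of $M$, and \textbf{(Type 2)} subsets $S \subseteq E$ with $|S| = r+1$ and $r_M(S) = r$, i.e., $(r+1)$-subsets of $E$ containing a unique circuit of $M$. An analogous analysis applied to the minimal dependent sets identifies the circuits of $M'$ as being either of the form $\{e_0\} \cup C$ for $C$ a circuit of $M$, or a minimal subset of $E$ of nullity at least $2$ in $M$.

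With $e_0$ the minimum, each circuit $\{e_0\} \cup C$ yields $C$ itself as a broken circuit of $M'$. I then claim every Type 1 basis $\{e_0\} \cup B$ is NBC. Since $B$ is independent in $M$ it contains no circuit of $M$, ruling out broken circuits of the first kind; and if $B$ contained some broken circuit $D \setminus \{\min D\}$ coming from a circuit $D \subseteq E$ of $M'$, then independence of $B$ would force $r_M(D \setminus \{\min D\}) = |D|-1$, hence $r_M(D) \ge |D|-1$, contradicting the defining nullity $\ge 2$ of $D$. Conversely, every Type 2 basis $S$ contains a circuit $C$ of $M$ (by the unique-circuit characterization), and $C$ is a broken circuit of $M'$, so $S$ is not NBC. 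Hence $B \mapsto \{e_0\}\cup B$ is the desired bijection between bases of $M$ and NBC bases of $M'$.

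The main obstacle is the second paragraph: tracking the free-extension/duality recipe carefully enough to pin down exactly which $(r+1)$-subsets of $E$ are bases of $M'$ and which minimal subsets of $E$ are its circuits. Once that structural description is in hand, the short matroid-rank argument in the third paragraph identifies the NBC bases with Type 1 and finishes the proof.
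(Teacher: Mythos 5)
The paper states this as a fact and cites Brylawski~\cite{Brylawski77} without reproducing a proof, so there is no in-paper argument to compare against line by line. Your construction is in fact the standard one from that reference: $M'$ is the free coextension of $M$, with the new element $e_0$ declared smallest. I checked the structural claims you deferred to ``unpacking the duality'': writing $r_{M'}(A) = \min(r_M(A)+1,\,|A|)$ for $A \subseteq E$ and $r_{M'}(A \cup \{e_0\}) = r_M(A)+1$ (which follows from $M'/e_0 = M$ and $e_0$ not a loop), one indeed gets that the bases of $M'$ are exactly your Type~1 and Type~2 families, and the circuits of $M'$ are exactly $\{e_0\}\cup C$ for $C$ a circuit of $M$ together with the minimal subsets of $E$ of nullity~$\ge 2$ in $M$. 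Your rank argument in the third paragraph correctly shows Type~1 bases are NBC (independence of $B$ forces nullity~$\le 1$ for any $D$ with $D\setminus\min D\subseteq B$, contradicting nullity~$\ge 2$), and Type~2 bases are not (they contain a circuit of $M$, which is a broken circuit of $M'$). The bijection $B \mapsto \{e_0\}\cup B$ follows. This is a correct and complete reconstruction of the cited result.
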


Furthermore, if the above conjecture is true, then since matroids are closed under truncation, one can also count the number of all NBC independent sets of $M$, thus resolving all of the open problems in \cref{fact:NBCcons}.

A promising reason to expect these problems to be tractable in the first place is the remarkable work of Adiprasito, Huh and Katz \cite{adiprasito2018hodge} who proved the Rota's conjecture showing that the face numbers of a broken circuit complex (see below for definition) of any matroid forms a log-concave sequence. For comparison, it is well-known that the coefficients of the matching polynomial of any graph form a log-concave sequence and the classical algorithm of Jerrum-Sinclair \cite{doi:10.1137/0218077} gives an efficient algorithm to count the number of matchings of any graph (although to this date we still don't know an efficient algorithm to count the number of {\bf perfect} matchings of general graphs).

\subsection{Background}
The existing analyses of the mixing time of the down-up walk for bases of matroids, crucially rely on the theory of high dimensional simplicial complxes  \cite{ALOV19, KO18}, which has found many intriguing applications in several areas of computer science and math in the past few years \cite{gotlib2023high}. 

{\bf Simplicial Complex.} A {\it simplicial complex} $X$ on a finite ground set $U$ is a downwards closed set system, i.e. if $\tau \in X$ and $\sigma \subset \tau \subseteq U$, then $\sigma \in X$. The elements of $X$ are called faces, and the maximal faces are called facets.  We say $X$ is a pure $d$-dimensional complex if all of its facets are of  size $d$. We denote the set of facets by $X(d)$. 
 A weighted simplicial complex $(X,\pi)$ is a simplicial complex $X$ paired with a probability distribution $\pi$ on its facets. The global walk (down-up walk) $P^{\vee}$ on the facets of a $d$-dimensional complex $(X,\pi)$  is defined as follows: starting at a facet $\tau$, we transition to the next facet $\tau'$ by the following two steps:
\begin{enumerate}
    \item Select a uniformly random element $x \in \tau$ and remove $x$ from $\tau$.
    \item Select a random facet $\tau'$ containing $\tau \setminus \{x\}$ with probability proportional to $\pi (\tau')$.
\end{enumerate}

{\bf Broken Circuit Complex.} For a concrete example, it turns out that the set NBC independent sets of any matroid $M$ (with respect to any ordering $\ord$) form a {\bf pure} simplicial complex that is known as the {\it broken circuit complex}. We denote this complex by $\NBC(M, \ord)$. We state purity as the following fact.
\begin{fact}\label{fact:purity}
    For every NBC independent set $I$, there exists an NBC base $B$ such that $I\subseteq B$.
\end{fact}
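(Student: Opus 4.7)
My plan is to extend $I$ greedily in increasing $\ord$-order. Formally, initialize $B := I$, and for each $e \in E \setminus I$ taken in increasing order under $\ord$, update $B := B \cup \{e\}$ whenever $B \cup \{e\} \in \mathcal{I}$. Since $I \in \mathcal{I}$, the standard matroid greedy/exchange argument guarantees that the final $B$ is a maximal independent set containing $I$, and hence a base of $M$.

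The content of the proof is showing that this $B$ is NBC. Suppose toward contradiction that $B$ contains a broken circuit $C \setminus \{f\}$, where $f$ is the $\ord$-minimum of the circuit $C$. Since $I$ is NBC we cannot have $C \setminus \{f\} \subseteq I$, and in particular $f \notin I$ (else $C \subseteq I$ would violate independence of $I$). Therefore the algorithm considered $f$ and rejected it: letting $B_f$ denote the value of $B$ just before $f$ was processed, the set $B_f \cup \{f\}$ is dependent, so there is a circuit $C' \subseteq B_f \cup \{f\}$ with $f \in C'$.

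The crux is arguing $C \neq C'$, which is the one place the NBC hypothesis on $I$ is used. Observe that $B_f$ is exactly $I$ together with the elements of $E \setminus I$ added before step $f$, all of which are $<_\ord f$ by the processing order. If we had $C = C'$ then $C \setminus \{f\} \subseteq B_f$; but every element of $C \setminus \{f\}$ is $>_\ord f$ because $f = \min_\ord C$, so none of them can be an added element, forcing $C \setminus \{f\} \subseteq I$ and contradicting that $I$ is NBC. With $C \neq C'$ in hand, the circuit elimination axiom furnishes a circuit $C'' \subseteq (C \cup C') \setminus \{f\}$. Since $C \setminus \{f\} \subseteq B$ and $C' \setminus \{f\} \subseteq B_f \subseteq B$, we obtain $C'' \subseteq B$, contradicting $B \in \mathcal{I}$.

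The only non-routine step is establishing the distinctness $C \neq C'$ before invoking circuit elimination; once that is done, the conclusion is immediate from the greedy construction and the circuit axioms.
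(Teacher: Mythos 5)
The paper states this purity claim as a known fact without giving a proof, so there is no ``paper argument'' to compare against; I am assessing your proof on its own terms. Your greedy-extension argument is the standard and correct way to establish it, and the key point---that processing elements in increasing $\ord$-order and appealing to circuit elimination after ruling out $C=C'$---is exactly the right structure. The observation that $C=C'$ would force $C\setminus\{f\}\subseteq I$ (because all elements added before $f$ are $<_\ord f$ while all of $C\setminus\{f\}$ is $>_\ord f$) is the crux, and you have it.

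One small slip in the justification: you write that $f\notin I$ because ``else $C\subseteq I$ would violate independence of $I$,'' but you only know $C\setminus\{f\}\subseteq B$, not $C\setminus\{f\}\subseteq I$, so $f\in I$ would not by itself put all of $C$ inside $I$. The correct (and just as short) reasoning is: $C\setminus\{f\}\subseteq B$ and $B$ independent force $f\notin B$, and then $f\notin I$ since $I\subseteq B$; this is also what certifies that $f$ was indeed considered and rejected by the greedy step. With that one-line fix the proof is complete and correct.
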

 The face numbers of the complex $BC(M,\ord)$ is the sequence $n_0, n_1,\dots,n_r$ where $n_i$ is the number of NBC independent sets of rank $i$. As alluded to above this sequence is in variant over $\ord$. The down-up walk over this complex equipped with a uniform distribution over its facets is the same as the down-up walk over NBC bases we explained before.
 
The link of a face $\tau\in X$ is the simplicial complex $X_{\tau} \coloneqq \{\sigma \setminus \tau : \sigma \in X, \sigma \supset \tau\}$. For each face $\tau$, we define the induced distribution $\pi_{\tau}$ on the facets of $X_{\tau}$ as 
\begin{align}\label{eq:pidef}
    \pi_{\tau}(\eta) = \Pr_{\sigma \sim \pi}[\sigma \supset \eta \mid \sigma \supset \tau]. 
\end{align}

{\bf Local Walks.} For any face $\tau$  of size $ 0 \leq k \leq d-2$, the local walk for $\tau$ is a Markov chain on the ground set of $X_{\tau}$ with transition probability matrix $P_{\tau}$ is defined as
\begin{align}\label{eq:localwalkdef}
    P_{\tau}(x,y)   = \frac{1}{d-k-1} \Pr_{\sigma \sim \pi_\tau}[y \in \sigma \mid \sigma \supset \tau \cup \{x\}]. %
\end{align}
for distinct $x,y$ in the ground set of $X_\tau$. 
The following theorem shows that the spectral expansion of the global walk $P^{\vee}$ on a simplicial complex can be bounded through bounding the local spectral expansion of the complex. 
\begin{theorem}[Local-to-Global Theorem \cite{DK17,KO18,DDFH18,AL20}]\label{thm:localtoglobal}
Say a $\dimension$-dimensional weighted simplicial complex $(X,\pi)$  is a $(\gamma_{0},\dots,\gamma_{\dimension-2})$-local spectral expander if for every face $\tau$ of size $0 \leq k\leq d-2$,  the second largest eigenvalue of $P_\tau$ is at most $\gamma_k$, i.e., $\lambda_{2}(P_{\tau}) \leq \gamma_{k}$.

Given a weighted simplical complex  $(X,\pi_{\dimension})$ 
that is a $(\gamma_{0},\dots,\gamma_{\dimension-2})$-local spectral expander, the down-up walk which samples from $\pi$ has spectral gap lower bounded by
\begin{align*}
    1- \lambda_2(P^{\vee}) \geq \frac{1}{\dimension} \prod_{j=0}^{\dimension-2} (1 - \gamma_{j})
\end{align*}
\end{theorem}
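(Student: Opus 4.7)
The plan is to prove the theorem by induction on the dimension $d$, following the operator-theoretic framework developed by Kaufman-Oppenheim and refined by Dinur-Kaufman and Alev-Lau. The base case $d=1$ holds trivially since the claimed bound becomes $1 - \lambda_2(P^{\vee}) \geq 1$ under the empty-product convention. The real work is in the inductive step, which I would phrase as the uniform statement that for every $1 \leq k \leq d$,
\[
1 - \lambda_2(P_k^{\vee}) \geq \frac{1}{k}\prod_{j=0}^{k-2}(1 - \gamma_j),
\]
where $P_k^{\vee}$ is the down-up walk on the level-$k$ faces of $X$, weighted according to the projected marginal $\pi_k$ of $\pi$.

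First I would set up the up and down operators. For each $0 \leq k \leq d$, let $\pi_k$ be the marginal of $\pi$ on $X(k)$ obtained by projecting the facet distribution down, and equip $\mathbb{R}^{X(k)}$ with the inner product weighted by $\pi_k$. Define down operators $D_k : \mathbb{R}^{X(k)} \to \mathbb{R}^{X(k-1)}$ and their adjoint up operators $U_{k-1} : \mathbb{R}^{X(k-1)} \to \mathbb{R}^{X(k)}$, so that $P_k^{\vee} = U_{k-1} D_k$ and $P_{k-1}^{\wedge} = D_k U_{k-1}$. Since $AB$ and $BA$ share nonzero spectra, $\lambda_2(P_k^{\vee}) = \lambda_2(P_{k-1}^{\wedge})$, reducing the problem to bounding the spectral gap of the up-down walk one level below.

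The heart of the argument is a Garland-style decomposition identity (due to Oppenheim in this setting) that expresses the up-down walk $P_{k-1}^{\wedge}$ as a convex combination of the identity and a walk that, conditioned on passing through a $(k-2)$-face $\tau$, becomes the local walk $P_\tau$ on the one-skeleton of the link $X_\tau$. The $\gamma_{k-2}$ hypothesis controls the spectrum of each such local walk. Combining the decomposition with the inductive hypothesis applied to each link $X_\tau$, which is itself a $(d-|\tau|)$-dimensional weighted simplicial complex carrying the local-expansion profile $(\gamma_{|\tau|}, \ldots, \gamma_{d-2})$, yields the desired multiplicative factor $(1 - \gamma_{k-2})$, and the extra factor of $\frac{1}{k}$ accounts for the laziness of the up-down walk. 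Telescoping the resulting recursion across $k = d, d-1, \ldots, 1$ produces the claimed product bound.

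The main obstacle is proving the decomposition identity precisely and carrying out the eigenvalue bookkeeping across levels. One must verify that the marginals $\pi_k$ are consistent across levels so that $D_k$ and $U_{k-1}$ are genuine adjoints; that eigenvectors orthogonal to constants at one level stay orthogonal to constants after up- or down-projection, so that bounds on second eigenvalues actually propagate; and that each link inherits the correctly shifted local-expansion profile needed to invoke the inductive hypothesis. Once these points are handled cleanly, the multiplicative conclusion drops out of the decomposition identity composed with the induction.
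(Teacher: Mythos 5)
The paper does not prove this theorem; it is stated as a known result and attributed to \cite{DK17,KO18,DDFH18,AL20}, so there is no ``paper's own proof'' to compare against. Your outline correctly reconstructs the standard argument from those references: define the $\pi_k$-weighted up/down operators $U_{k-1}, D_k$ as adjoints, pass from $P_k^\vee = U_{k-1}D_k$ to $P_{k-1}^\wedge = D_k U_{k-1}$ using the shared nonzero spectrum, decompose the non-lazy part of the up-down walk as an average of local walks on links (Oppenheim's Garland-style identity), control each local walk by the $\gamma_{k-2}$ hypothesis, and telescope. The multiplicative form $\frac{1}{d}\prod_j(1-\gamma_j)$ is specifically the Alev--Lau refinement, and your recursion ``$1-\lambda_2(P_k^\vee)\geq \frac{1}{k}\prod_{j=0}^{k-2}(1-\gamma_j)$'' is the right shape for that argument.

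That said, what you have written is a roadmap rather than a proof. You explicitly defer the three hard points: (i) the exact decomposition identity $P_{k-1}^{\wedge} = \frac{1}{k}I + (1-\frac{1}{k})\,\mathbb{E}_{\tau}[\text{local walk at }\tau]$ (appropriately $\pi$-weighted), (ii) the verification that this decomposition interacts correctly with the second-eigenvalue functional (one needs the constant functions to be fixed and the relevant quadratic-form inequality to compose across levels, not just a naive eigenvalue bound), and (iii) the inheritance of the shifted expansion profile $(\gamma_{|\tau|},\dots,\gamma_{d-2})$ on links. These are precisely where the content of Kaufman--Oppenheim and Alev--Lau lies, and asserting that they ``drop out once handled cleanly'' leaves the proof incomplete. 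Since the paper under review also does not supply these details, there is no gap relative to the paper; but if your goal was to produce a self-contained proof, you would need to actually establish the decomposition identity and carry out the quadratic-form bookkeeping, most likely by following Lemma~1.5 and Theorem~1.6 of Alev--Lau or the corresponding statements in Kaufman--Oppenheim.
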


To prove that the down-up walk mixes rapidly on the bases of any matroid, \cite{ALOV19} proved that the independent set complex of any matroid $M$ is a $(0,0,\dots,0)$-local spectral expander. Building on this, a natural method to prove \cref{conj:mixingdownupNBC} is to show that the broken circuit complex of any matroid $M$ of rank $r$ and for any total ordering is a $(\gamma_{0},\dots,\gamma_{r-2})$-local spectral expander for $\gamma_i\leq \frac{O(1)}{r-i}$.
\begin{conjecture}\label{conj:BCClocalwalks}
    For any matroid $M$ of rank $r$ and any ordering $\ord$ the broken circuit complex of $M$ is a $(\gamma_{0},\dots,\gamma_{r-2})$-local spectral expander for some $\gamma_i\leq \frac{O(1)}{r-i}$
\end{conjecture}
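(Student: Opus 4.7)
The natural plan is to apply the Local-to-Global theorem (\cref{thm:localtoglobal}) in two reductions: first, reduce the general spectral bound to a bound on the $1$-skeleton walk by exploiting the combinatorial structure of links; second, extract the $1$-skeleton bound from the Adiprasito--Huh--Katz (AHK) log-concavity theorem by lifting it to a statement about the multivariate NBC-basis generating polynomial.

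\textbf{Step 1 (link reduction).} For an NBC independent set $\tau$, I would first try to prove that the link $\NBC(M,\ord)_\tau$ is itself the broken circuit complex of the contraction $M/\tau$ under an ordering inherited from $\ord$---most naturally, the ordering in which the elements of $\tau$ are made smallest. The subtle point is that circuits of $M$ whose minimum lies in $\tau$ give rise to ``forbidden faces'' of the link that need not be broken circuits of $M/\tau$; identifying or absorbing these extra constraints (e.g.\ by a further small modification of the matroid or the ordering, in the spirit of \cref{fact:nbctomatroid}) is where the detail work of Step 1 would go. If such an identification can be arranged, then by induction the task reduces to proving $\lambda_2(P_\emptyset) \le O(1)/r$ for the $1$-skeleton walk on every $\NBC(M,\ord)$ of rank $r$.

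\textbf{Step 2 (one-skeleton via AHK).} Bounding $\lambda_2(P_\emptyset)$ amounts to controlling the correlation structure of elements in a uniformly random NBC base. The strongest route would be to prove that the multivariate generating polynomial $g_{M,\ord}(x) = \sum_{B} \prod_{e \in B} x_e$, summed over NBC bases $B$, is \emph{Lorentzian} in the sense of Br\"and\'en--Huh; the Anari--Liu--Oveis Gharan--Vinzant framework would then give $\gamma_i = 0$ at every link, stronger than the conjecture. AHK's theorem yields univariate log-concavity of the $f$-vector, which is exactly the diagonal specialization of $g_{M,\ord}$, so a multivariate lift of AHK would close the problem. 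A weaker statement still sufficient for the conjecture is that the normalized Hessian of $\log g_{M,\ord}$ at the uniform point has operator norm $O(1)$.

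\textbf{Main obstacle.} The decisive obstruction is that $\NBC(M,\ord)$ is not a matroid---it fails the exchange axiom---so the ALOV proof does not apply off the shelf, and AHK's Hodge-theoretic machinery operates at the level of numerical invariants rather than at the multivariate polynomial level. I would try to build a K\"ahler package on a combinatorial Chow-type ring that tracks the broken circuit structure with multiple weights, thereby lifting AHK's univariate log-concavity to the multivariate statement needed in Step 2. Failing that, an inductive entropy-decay argument using the contraction structure from Step 1 together with AHK's univariate bound might suffice while tolerating a constant-factor loss at each level---which is exactly the source of the $O(1)/(r-i)$ slack (rather than $0$) in the conjecture. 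The hardness results proved later in this paper suggest that any successful argument must genuinely exploit the uniform measure, since adjacent conditional measures can already encode NP-hard counting problems.
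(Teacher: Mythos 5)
The statement you are trying to prove is \emph{false}, and the paper's main result (\cref{thm:nbc-contract-down-up}) is precisely a disproof of it. The authors exhibit an explicit infinite family of truncated graphic matroids $M_n$ with orderings $\ord_n$ and a face $\tau$ in each $\NBC(M_n,\ord_n)$ for which the down-up walk on the link $X_\tau$ has spectral gap at most $n^{-\Omega(n)}$; by \cref{thm:localtoglobal} this forces some $\gamma_k$ to be bounded away from $0$ (indeed $1-\gamma_k$ must be exponentially small), which is incompatible with $\gamma_i \le O(1)/(r-i)$. No proof of the conjecture can exist, so the question is where your plan breaks.

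The break is exactly the ``subtle point'' you flagged in Step 1 and then conditioned on being resolvable: the link of $\tau$ in $\NBC(M,\ord)$ is \emph{not} the broken circuit complex of $M/\tau$ under any inherited order, nor is it obtainable from a small modification in the spirit of \cref{fact:nbctomatroid}. The forbidden configurations contributed by circuits of $M$ whose minimum lies in $\tau$ (or, in the truncated case, the extra circuits coming from truncation) impose constraints on the link that destroy matroid-like structure entirely. The paper's construction takes $G = K_{n,n}$, augments it with a star of length-two paths through a hub $z$ and a pendant edge $e_0$, truncates, and sets $\tau$ to be all the hub edges $e_{v,i}$. After contracting $\tau$, the ordering forces every facet of the link to avoid one full side of the bipartition among the $f$-edges, so the facet set splits into two macroscopic blocks ($\mathcal{S}_A$ vs.\ $\mathcal{S}_B$) joined only through an exponentially thin middle layer; a direct conductance computation then gives the exponentially small gap. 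In other words, the link is not merely ``not a BC complex'' --- it is a disconnected-looking complex glued by a bottleneck, which is the worst possible situation for your inductive scheme.

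Your Step 2 is also doomed independently of Step 1: if the multivariate NBC-basis generating polynomial were Lorentzian, or even log-concave as a function on the positive orthant, then the down-up walk would mix on all links, contradicting \cref{thm:nbc-contract-down-up}, and the weighted counting problem would have an FPRAS, contradicting \cref{thm:nbc-field-count}. So AHK's univariate log-concavity of the $f$-vector provably does \emph{not} lift to the multivariate statement you need. The distinction you drew at the end between ``the uniform measure'' and ``adjacent conditional measures'' is the right instinct; a correct account of why the conjecture fails is that conditioning on a face is the same as restricting to a link, and the paper shows some links are bad. You should reorient your effort toward understanding the counterexample rather than constructing a K\"ahler package: the construction in Figure~1 and the proof of \cref{thm:nbc-contract-down-up} is the content to internalize here.
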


\subsection{Our results}

Our main result is to disprove \cref{conj:BCClocalwalks} in a very strong form, namely for the class of (truncated) graphic matroids. 
\begin{restatable}{theorem}{THMnbccontractdownup}\label{thm:nbc-contract-down-up}
    There exists an infinite sequence of (truncated) graphic matroids $M_1,M_2,\dots$ with orderings $\ord_1, \ord_2, \dots$, such that for every $n\geq 1$, $M_n$ has  $\text{poly}(n)$ elements, and there exists a face $\tau$ of the broken circuit complex of $X=\NBC (M_n,\ord)$ for which the down-up walk on the facets of the link $X_\tau$ has a spectral gap of at most $n^{-\Omega(n)}$.
\end{restatable}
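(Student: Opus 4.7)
The strategy is a Cheeger-type bottleneck argument on the $1$-skeleton of the down-up walk on the facets of $X_{\tau_n}$. I would exhibit a truncated graphic matroid $M_n$, an edge ordering $\ord_n$, and a face $\tau_n$ of $\NBC(M_n, \ord_n)$ whose link has facets partitioning into two clusters $A_n, B_n$ of comparable size, with only an $n^{-\Omega(n)}$ fraction of single-swap down-up transitions crossing between them. Cheeger's inequality then yields a spectral gap of at most $n^{-\Omega(n)}$.

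Truncation is the central new ingredient. In the rank-$k$ truncation of $M(G_n)$, every independent $(k+1)$-edge forest becomes a circuit, and deleting its smallest edge produces a broken circuit of size $k$. Equivalently, any NBC base $B$ in the truncation must satisfy that every edge smaller than $\min B$ has both endpoints in a common component of $B$, so that it cannot be used to extend $B$ to a larger forest. This combinatorial explosion of broken circuits, far richer than just graph cycles, lets me impose sharp global constraints on NBC bases that couple otherwise unrelated parts of the graph, something a non-truncated graphic matroid cannot achieve as easily.

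For the construction, I would take $G_n$ to consist of a small ``backbone'' together with $\Theta(n)$ coupled ``gadgets'', each gadget offering a binary edge choice. The ordering $\ord_n$ places the smallest edges in the backbone and arranges gadget edges so that truncation broken circuits span many gadgets, tightly coupling their choices. Choosing $\tau_n$ to fix the backbone, the facets of the link $X_{\tau_n}$ are parameterized by gadget configurations subject to a global NBC constraint. The clusters $A_n, B_n$ are defined by a combinatorial invariant of the gadget states (say, a majority vote), and since a single-swap down-up move can flip only one gadget at a time, almost all single-gadget flips are blocked by NBC except from a narrow ``balanced'' set of configurations that forms the bottleneck.

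The main obstacle will be engineering the gadget and ordering so that the bridge set has exponentially, rather than merely polynomially, small size. The coupling induced by truncation circuits must be strong enough that transitioning across the cluster boundary requires coordinated changes that a single swap cannot realize without passing through non-NBC sets. Proving the bottleneck will reduce to a careful counting estimate showing that the boundary facets admit the needed coincidences in only an $n^{-\Omega(n)}$ fraction of configurations; I expect the truncation rank $k$ (linear in $n$) to contribute multiplicatively, with each gadget providing an essentially independent constraint, giving the $n^{-\Omega(n)}$ bound.
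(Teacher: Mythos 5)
Your high-level strategy matches the paper's: exhibit a truncated graphic matroid with a face $\tau$ whose link has a small-conductance cut, and apply the easy direction of Cheeger. You are also right that truncation is essential for generating the extra broken circuits that constrain the facets. But the concrete mechanism you sketch does not match the paper's and, as described, has a gap.

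The paper's construction is $G' = $ ($K_{n,n}$ with parts $A,B$) together with a hub vertex $z$, a pendant edge $e_0=\{y,z\}$, and for each $v\in V$ an $\ell$-fold bundle of length-two paths $z\!-\!z_{v,i}\!-\!v$ (edges $e_{v,i}$, $f_{v,i}$), with ordering $e_0 < E < \{e_{v,i}\} < \{f_{v,i}\}$, truncated so that facets of the link of $\tau=\{e_{v,i}\}$ have size $n+1$. The key structural fact is an \emph{all-or-nothing dichotomy}: because every $\{u,v\}$ with $u\in A,v\in B$ is present in $K_{n,n}$ and is smaller than all $e,f$ edges, no facet of $X_\tau$ can contain both an $f$-edge on the $A$ side and an $f$-edge on the $B$ side. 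The cut is therefore $\mathcal{S}_A$ vs.\ $\mathcal{S}_B$, and the only facets adjacent across it are those with at most one $f$-edge. The exponentially small conductance then comes from the inflation parameter $\ell$: there are $\geq \ell^n$ facets of the form $\{f_{v,i_v}:v\in A\}\cup\{e_0\}$ (and these are NBC precisely because $e_0$ is the globally smallest edge, so all truncation broken circuits drop $e_0$), while the bridge set has size only $O(\ell n^{2n})$. Taking $\ell\geq n^3$ gives conductance $n^{-\Omega(n)}$.

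Your proposal instead describes binary-choice gadgets with clusters defined by a ``majority vote'' of gadget states, with the bottleneck being a ``narrow balanced set.'' This is not what the paper does, and I do not see how it closes the argument as stated. Two concrete problems. First, with the uniform distribution on facets, a Hamming-ball / balanced-layer bottleneck typically has conductance that is only polynomially small, not $n^{-\Omega(n)}$: the middle layers carry a $\Theta(1/\sqrt{n})$ fraction of the mass, so you would need the NBC constraint to selectively delete almost all of them, and you have not proposed a mechanism that does this. Second, your claim that ``almost all single-gadget flips are blocked by NBC'' misidentifies where the obstruction lives: in a working construction one wants motion to be \emph{free within} a cluster (so the cluster has huge measure) and blocked only across it. What makes the paper's bound exponential is not that most individual swaps are illegal, but that all legal configurations split cleanly into two sides, with the transit set having polynomially bounded cardinality while $\ell$ provides an exponential per-vertex multiplicity on each side. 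Without that ``all $f$-edges on one side'' dichotomy, and without the $\ell$-fold inflation to make one side exponentially large, the construction you describe would at best give a polynomial gap. So the central missing idea is the $K_{n,n}$-plus-pendant-bundles gadget: it is precisely the completeness of the bipartite graph (every cross edge is small in the order) that forbids mixing the two sides in a single facet, and the $\ell$-fold multiplicity that amplifies the imbalance.
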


In fact, we even prove a stronger statement
\begin{restatable}{theorem}{THMnbccontractcount}\label{thm:nbc-contract-count}
    Given a matroid $M=(E,{\cal I})$ and a total ordering $\ord$ and a set $S\subseteq E$, unless RP=NP, there is no FPRAS for counting the number of  NBC bases of $M$ that contain $S$.
\end{restatable}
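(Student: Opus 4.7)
The plan hinges on a single clean equivalence. By Fact~\ref{fact:purity}, every NBC independent set extends to an NBC base, and any subset of an NBC base is itself NBC; hence the number of NBC bases of $M$ containing $S$ is positive if and only if $S$ is itself an NBC independent set. An FPRAS with relative error $\tfrac12$ must output $0$ on zero instances and at least $\tfrac12$ on positive instances, so an FPRAS would yield a randomized polynomial-time decision procedure for ``is $S$ NBC in $(M,\ord)$?''. To prove the theorem it therefore suffices to exhibit, in polynomial time from an NP-hard instance, a triple $(M,\ord,S)$ for which this decision problem is NP-hard.

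I would reduce from $3$-SAT. Given a formula $\phi$ with variables $x_1,\dots,x_n$ and clauses $c_1,\dots,c_m$, construct $(M_\phi,\ord_\phi,S_\phi)$ gadget-wise. Each variable $x_i$ is represented by a constant-size ``variable gadget'' that offers two mutually exclusive positions (elements outside $S_\phi$) encoding TRUE or FALSE; each clause $c_j$ is represented by a ``clause gadget'' contributing a distinguished circuit $C_j$ of $M_\phi$ whose $\ord_\phi$-minimum is the unique element whose presence in $S_\phi$ would witness the falsification of $c_j$ by the assignment encoded in the rest of $S_\phi$. By design, the broken circuit $C_j \setminus \{\min_{\ord_\phi} C_j\}$ is contained in $S_\phi$ exactly when all three literals of $c_j$ are pinned to their falsifying values. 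Consequently $S_\phi$ is NBC iff every clause admits a satisfying literal, i.e.\ iff $\phi$ is satisfiable; an FPRAS would then decide $3$-SAT in RP, giving $\mathrm{NP}\subseteq \mathrm{RP}$.

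The main obstacle is that for simple matroid representations (graphic, linear, transversal) the check ``is $S$ NBC?'' is polynomial: for each $e\notin S$ one computes the fundamental circuit of $e$ with $S$ and verifies that $e$ is not the ordering-minimum of that circuit. A naive graphic encoding of $\phi$ therefore cannot make the decision NP-hard, since the test decomposes into independent local fundamental-circuit checks per candidate $e$. To defeat this, the construction must produce broken circuits that cross-cut several variable and clause gadgets simultaneously, so that no local fundamental-circuit comparison suffices. A promising route, suggested by Theorem~\ref{thm:nbc-contract-down-up}, is to base $M_\phi$ on a truncated graphic matroid (so that forests of size one above the rank become circuits, yielding broken circuits that span the entire gadget layout), or alternatively to take a matroid direct sum/union with auxiliary ``skeleton'' elements placed near the bottom of $\ord_\phi$ to force the relevant minima to lie in SAT-critical positions. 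The technical core of the proof is then to verify that the construction is polynomial-sized, that the gadget interactions correctly couple variable choices to clause satisfaction, and that the NBC--satisfiability equivalence holds; the purity of the broken-circuit complex guarantees the count is at least $1$ (and not merely nonzero fractionally) whenever $\phi$ is satisfiable.
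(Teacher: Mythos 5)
The proposal contains a fundamental gap that the reduction cannot get around. You propose to make the decision problem ``is $S$ NBC?'' NP-hard, so that the $0$-versus-$\geq 1$ gap (guaranteed by purity, Fact~\ref{fact:purity}) lets an FPRAS solve it in RP. But this decision problem is in $\mathrm{P}$ for \emph{every} matroid given by an independence oracle, not just for the ``simple representations'' you flag as an obstacle. The reason: any broken circuit $C\setminus\{\min C\}$ contained in an independent set $S$ satisfies $|C\setminus S| \leq 1$, hence $C\setminus S = \{\min C\}$ exactly; so $C$ is the fundamental circuit of $e:=\min C$ with respect to $S$. Thus $S$ is NBC iff for every $e\notin S$ with $S\cup\{e\}$ dependent, the fundamental circuit $C(e,S)$ does not have $e$ as its $\ord$-minimum. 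Computing $C(e,S)$ takes $O(|S|)$ independence queries, so the whole test costs $O(|E|\cdot|S|)$ queries. Truncation does not help --- the fundamental-circuit check still works verbatim in the truncated matroid. There is no gadget construction that can ``cross-cut'' this, because the broken circuit forced to lie in $S$ is by definition local to a single missing element.

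The paper's proof takes a genuinely different route precisely because the $0$-versus-$1$ gap does not work. It reduces from an \emph{approximate counting} problem, $\sharp\textsc{INDEP-SET-INC}(7,\tfrac{2}{19})$ (shown hard via $\sharp\textsc{HC}$), and builds a truncated graphic matroid $M$ with a contracting set $\tau$ such that the number of NBC bases of $M$ containing $\tau$ is within a factor $2$ of $\ell^{\lfloor \alpha n\rfloor}N$, where $N$ is the target independent-set count. The hardness lives in the actual magnitude of the count, not in distinguishing zero from positive; a constant-factor approximation of the NBC-base count directly yields a constant-factor approximation of $N$. Your observation about purity and the $0/\geq1$ gap is a true statement, but it cannot bear the weight you put on it; you would need to switch to a counting reduction of the kind in the paper (or in your own Theorem~\ref{thm:nbc-field-count}).
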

Although this theorem does not refute \cref{conj:mixingdownupNBC}, it shows that one probably need different techniques (or probably a different chain) to sample/count NBC bases of a matroid. Indeed, one may even need a different proof for the performance of down-up walk to sample ordinary bases of matroids. 

To complement our main results we also prove that, unlike optimization on bases of a matroid, optimization is NP-hard on the NBC bases of matroids. Moreover, unless NP= RP, there is no FPRAS for computing the sum of the weights of all NBC bases of a matroid subject to an external field, while the same computation over the bases of a matroid has a FPRAS.
\begin{restatable}{theorem}{THMoptnbc}\label{thm:opt-nbc}
Given a matroid $M=(E,{\cal I})$ with $|E|=n$ elements, an arbitrary total ordering $\ord$, and weights $w_1,\dots,w_n$, it is NP-hard to find the maximum weight NBC basis of $M$, where the weight of a NBC basis B is $\sum_{i\in B} w_i$.
\end{restatable}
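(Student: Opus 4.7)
The plan is to reduce the following decision problem to finding a maximum-weight NBC base: given $(M, \ord, S)$ with $S \subseteq E$, does $M$ admit an NBC base containing all of $S$? The reduction is immediate --- set $w_e = 1$ for $e \in S$ and $w_e = 0$ otherwise; every NBC base $B$ then satisfies $w(B) = |B \cap S| \le |S|$, with equality exactly when $S \subseteq B$. Hence an algorithm that finds the maximum-weight NBC base decides whether some NBC base contains $S$, so NP-hardness of the ``contains $S$'' decision problem implies \cref{thm:opt-nbc}.

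It therefore suffices to establish NP-hardness of this decision problem, and this is essentially what (the proof of) \cref{thm:nbc-contract-count} must provide. Any candidate FPRAS for counting NBC bases containing $S$ has to distinguish a zero count from a positive count in randomized polynomial time; so the ``no FPRAS unless RP $=$ NP'' assertion of \cref{thm:nbc-contract-count} can only be proved via a reduction whose underlying decision problem is already NP-hard. I will directly invoke that reduction; combined with the weight encoding above, it yields the polynomial-time many-one reduction needed for \cref{thm:opt-nbc}.

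For a self-contained alternative that does not piggyback on \cref{thm:nbc-contract-count}, one can reduce directly from 3-SAT. The idea is to build variable gadgets from 3-element circuits $\{h_i, T_i, F_i\}$ with $h_i$ smallest in $\ord$, so that the broken circuit $\{T_i, F_i\}$ forbids selecting both at once and encodes a binary choice, and clause gadgets from 4-element circuits $\{h_c, F^c_1, F^c_2, F^c_3\}$ with $h_c$ smallest, so that the broken circuit $\{F^c_1, F^c_2, F^c_3\}$ forbids all three ``false-literal'' elements from appearing together, thereby encoding the disjunctive clause. Weights are then tuned so that maximum-weight NBC bases are in bijection with satisfying assignments. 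The main obstacle in this direct route is realizing the prescribed collection of circuits as a bona fide matroid: the circuit elimination axiom forces additional circuits whose broken circuits might impose unintended constraints. One handles this either via careful matroid constructions (linear or transversal representations, truncations, direct sums) or by implementing each gadget inside a (truncated) graphic matroid and using subdivisions or auxiliary edges to turn abstract ``forbidden pairs'' into triangles in the graph.
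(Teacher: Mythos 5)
The central step here is wrong. You argue that because \cref{thm:nbc-contract-count} rules out an FPRAS for counting NBC bases containing $S$ (unless $\mathrm{RP}=\mathrm{NP}$), the decision problem ``does $M$ have an NBC base containing $S$?'' must be NP-hard, and then you reduce that to max-weight NBC base. But ``no FPRAS'' does not imply the decision version is NP-hard: hardness of approximation can come entirely from distinguishing large positive counts (think of counting all independent sets of a graph — the decision version is trivially YES). Indeed, the construction in the paper's proof of \cref{thm:nbc-contract-count} always yields instances with count $\geq \ell^{\lfloor \alpha n\rfloor} N \geq 1$, so it never exhibits a zero-vs-nonzero gap. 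Worse, the decision problem you rely on is actually \emph{in P}: by purity (\cref{fact:purity}), an NBC base containing $S$ exists iff $S$ itself is an NBC independent set, and $S$ contains a broken circuit iff some $e \notin S$ lies in the span of $\{s \in S : s > e\}$ — a polynomial number of rank queries. So your weighted indicator trick, although a correct reduction, starts from a tractable problem and proves nothing. Your fallback 3-SAT gadget sketch honestly flags the main obstacle (realizing the prescribed circuits as a genuine matroid without unintended broken circuits) but does not resolve it, so it is not a proof either.

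For contrast, the paper's proof reduces from weighted \textsc{MAX-INDEP-SET} directly and sidesteps all of this. Given $G=(V,E)$ with vertex weights $w$, it forms $G'$ by adding one new vertex $z$ and an edge $e_v = \{z,v\}$ for each $v \in V$, orders $E < \{e_v : v \in V\}$, and assigns weight $w(v)$ to $e_v$ and $0$ to the original edges. For adjacent $u,v$, the triangle $\{e_u, e_v, \{u,v\}\}$ has its smallest element $\{u,v\} \in E$, so $\{e_u,e_v\}$ is a broken circuit; thus NBC bases of the graphic matroid of $G'$ pick up weight exactly from a subset of $\{e_v\}$ whose underlying vertex set is independent in $G$, and purity extends any such NBC independent set to an NBC base of the same weight. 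This gives a clean polynomial-time equivalence between the two optima, which is what you would need to supply.
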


\begin{restatable}{theorem}{THMnbcfieldcount}\label{thm:nbc-field-count}
Given a matroid $M=(E,{\cal I})$ with $|E|=n$ elements, a total ordering $\ord$, and weights $\{1\leq \lambda_e \leq O(n)\}_{e\in E}$, unless NP = RP, there is no FPRAS for computing the partition function of the $\lambda$-external field applied to uniform distribution of NBC independent sets, i.e., there is no FPRAS for computing :
$$ \sum_{B\text{ NBC Base}} \prod_{e \in B} \lambda_e.$$
\end{restatable}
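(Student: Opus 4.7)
The plan is to reduce from Theorem~\ref{thm:nbc-contract-count} by simulating the constraint ``$B \supseteq S$'' with a polynomially bounded external field. Since any FPRAS for a nonnegative integer count distinguishes $0$ from a positive value, Theorem~\ref{thm:nbc-contract-count} implies that the decision problem ``does $(M,\ord)$ admit an NBC base containing $S$?'' is NP-hard under NP $\neq$ RP; so it suffices to reduce this decision problem to approximating the weighted partition function.

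Given a hard instance $(M,\ord,S)$ with $n$ elements, I would enlarge the ground set by taking the direct sum $M' := M \oplus F$ with a free matroid $F$ on $m = \mathrm{poly}(n)$ elements, extending $\ord$ to $M'$ arbitrarily. Every element of $F$ is a coloop, so it lies in every base and in no circuit; consequently the broken-circuit structure is unchanged, and the NBC bases of $M'$ are exactly $\{B \cup E(F) : B\text{ an NBC base of }M\}$. The new ground set has size $N := n+m$, so the polynomial cap permits choosing $\lambda_e = N$ for $e \in S$ and $\lambda_e = 1$ for every other element. This yields
\[
Z \;=\; \sum_{k=0}^{|S|} N_k\, N^{k},
\qquad N_k := |\{B\text{ NBC base of }M : |B\cap S|=k\}|.
\]
An FPRAS for $Z$ with relative error less than $\tfrac{1}{2}$ would decide $N_{|S|} > 0$, and hence the NP-hard decision problem, provided the top term $N_{|S|}\, N^{|S|}$ dominates the sum of the remaining terms.

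The main obstacle is that a polynomial $N$ inflates the top term by only a polynomial factor, whereas $\sum_{k<|S|} N_k$ may be exponentially large in general. To overcome this, I would inspect the construction used for Theorem~\ref{thm:nbc-contract-count} and either (i) verify that it already produces matroids with at most $\mathrm{poly}(n)$ total NBC bases---which is typical of SAT-style matroid reductions in which NBC bases are in near-bijection with satisfying assignments---so that polynomial amplification suffices directly; or (ii) augment the construction with gadgets enforcing the ``all-or-nothing'' condition $|B\cap S|\in\{0,|S|\}$ on every NBC base, after which a single factor of $N$ separates the two cases. Designing such gadgets is the delicate part, since parallel extensions annihilate all but the smallest parallel copy in the NBC complex and series extensions interact subtly with the ordering; I would aim to realize the desired behavior using coloop pads together with careful positioning of $S$ early in $\ord$ to control which fundamental circuits become broken.
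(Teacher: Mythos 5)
Your very first step already contains a gap that sinks the argument: you claim that Theorem~\ref{thm:nbc-contract-count} yields NP-hardness of the \emph{decision} problem ``does $(M,\ord)$ admit an NBC base containing $S$?''\ But that decision problem is in $\mathrm{P}$. By \cref{fact:purity} the broken circuit complex is pure, so an NBC base containing $S$ exists iff $S$ itself is an NBC independent set, which is easy to test. The hardness in Theorem~\ref{thm:nbc-contract-count} is hardness of \emph{approximate counting}, not of deciding positivity, and an FPRAS's ability to distinguish $0$ from positive is irrelevant when the count is always positive whenever $S$ is NBC independent. So the reduction target you need is not the decision problem but an approximate-counting problem, which changes the whole shape of the argument.

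Even recast as an approximate-counting reduction, your plan hits the obstacle you yourself flag: with $\lambda_e = N = \mathrm{poly}(n)$, the term $N_{|S|}\,N^{|S|}$ need not dominate $\sum_{k<|S|} N_k\,N^k$, and the matroids produced for Theorem~\ref{thm:nbc-contract-count} do \emph{not} have polynomially many NBC bases (the base count already grows like $\ell^{\lfloor\alpha n\rfloor}N$, exponentially). Your fallback (ii)---gadgets enforcing $|B\cap S|\in\{0,|S|\}$---is exactly the hard part and is left unconstructed. The paper avoids the entire issue by not routing through Theorem~\ref{thm:nbc-contract-count} at all. It reduces \emph{directly} from $\sharp\textsc{INDEP-SET-INC}(7,2/19)$ using a much simpler gadget: a single extra vertex $z$ with one pendant edge $e_0$ and one edge $e_v=\{v,z\}$ per vertex $v$ of $G$, ordered $e_0 < E < \{e_v\}$, truncated at rank $\lfloor\alpha n\rfloor+1$, with weight $\ell$ only on the edges $e_v$ and weight $1$ elsewhere. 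The weight $\ell$ plays exactly the role that the $\ell$-fold parallel paths played in Theorem~\ref{thm:nbc-contract-count}: it makes the ``independent-set-encoding'' bases dominate. Crucially, the paper can bound the contribution of bases touching $E$ because the $\sharp\textsc{INDEP-SET-INC}$ promise $i_k(G)\le N$ for $k<\lfloor\alpha n\rfloor$ supplies exactly the monotonicity needed for the geometric-series estimate, yielding $Z\in[\ell^{\lfloor\alpha n\rfloor}N,\,2\ell^{\lfloor\alpha n\rfloor}N]$ with $\ell=O(n)$. That structural control is absent in your generic setup and is the real content of the proof.
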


It is well known that a $0/1$-polytope (i.e. the convex hull of a subset $S \subseteq \{0, 1\}^n$) has all vertices of equal hamming weight $r$ and edges of $\ell_2$ length $\sqrt{2}$ iff the polytope is a matroid base polytope of rank $r$ \cite{GELFAND1987301}. 
Moreover, assuming the Mihail-Vazirani conjecture, there is efficient algorithm to sample a uniformly random vertex of a $0/1$-polytope with constant sized edge length  \cite{MV89}.

We show that, unlike matroids, the NBC Base polytope, i.e. the convex hull of the indicator vectors of all NBC bases of a matroid $M$, has edges of arbitrarily long length. 
\begin{restatable}{theorem}{THMedgelength} \label{thm:edge-length}
    For any $n$, there exists a graphic matroid $M$ with $n$ elements and a total ordering $\ord$ such that the convex hull of all NBC bases of $M$ has edges of $\ell_2$ length at least $\Omega(\sqrt{n})$.
\end{restatable}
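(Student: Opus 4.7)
The plan is to exhibit an explicit family of graphic matroids for which two NBC bases are adjacent in the NBC base polytope while lying at Hamming distance $\Omega(n)$. Take $G$ to be the complete bipartite graph $K_{2,k}$, equivalently two vertices $u,v$ joined by $k$ internally disjoint paths of length $2$ through intermediate vertices $w_1,\dots,w_k$; then $n=2k$. I would label the edges $a_i=\{u,w_i\}$ and $b_i=\{w_i,v\}$ for $i\in[k]$, and totally order them by $a_1<a_2<\cdots<a_k<b_1<b_2<\cdots<b_k$.

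A degree count shows every spanning tree of $G$ has exactly one ``double'' index $m\in[k]$ for which both $a_m$ and $b_m$ lie in the tree, and for every $i\ne m$ it chooses exactly one of $a_i,b_i$. The only circuits of $G$ are the $4$-cycles $\{a_i,b_i,a_j,b_j\}$ with $i<j$, whose broken circuits are $\{b_i,a_j,b_j\}$. A tree with double index $m$ contains such a broken circuit iff $j=m$ and some $b_i$ with $i<m$ was chosen, so it is NBC iff it chooses $a_i$ (not $b_i$) for every $i<m$. Consequently $T_\alpha=\{a_1,b_1,b_2,\dots,b_k\}$ (double index $1$, $b_i$ for all $i>1$) and $T_\beta=\{a_1,a_2,\dots,a_k,b_k\}$ (double index $k$) are both NBC, satisfy $T_\alpha\cap T_\beta=\{a_1,b_k\}$, and have $|T_\alpha\triangle T_\beta|=2(k-1)$, giving $\|\mathbf{1}_{T_\alpha}-\mathbf{1}_{T_\beta}\|_2=\sqrt{2(k-1)}=\Omega(\sqrt n)$.

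The main step is to show that $T_\alpha,T_\beta$ span an edge of the NBC base polytope by verifying that the midpoint $v=\tfrac12(\mathbf{1}_{T_\alpha}+\mathbf{1}_{T_\beta})$ admits a unique convex decomposition over NBC bases. In any expression $v=\sum_T\lambda_T\mathbf{1}_T$ the coordinates $v_{a_1}=v_{b_k}=1$ force every $T$ in the support to contain $\{a_1,b_k\}$. Among NBC bases containing $b_k$, only one, namely $T_\beta$, can also contain $a_k$, because having both $a_k,b_k$ in the tree pins the double index to $m=k$, and the NBC criterion then forces $a_i$ to be chosen for every $i<k$. Since $v_{a_k}=\tfrac12$ this forces $\lambda_{T_\beta}=\tfrac12$, and the residual $v-\tfrac12\mathbf{1}_{T_\beta}=\tfrac12\mathbf{1}_{T_\alpha}$ can only be supported on NBC bases simultaneously contained in $T_\alpha$ and containing every element of $T_\alpha$, i.e.\ on $T_\alpha$ itself. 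Hence the decomposition is unique and $[\mathbf{1}_{T_\alpha},\mathbf{1}_{T_\beta}]$ is a $1$-face of the polytope.

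The main obstacle is the uniqueness argument just sketched; it relies critically on the ordering choice (all $a_i$'s before all $b_i$'s), which makes $T_\beta$ the lone NBC base containing both $a_k$ and $b_k$. For odd $n$ one simply appends a single bridge edge to $G$, which lies in every spanning tree and participates in no circuit, so it does not affect the NBC condition and only shifts $T_\alpha,T_\beta$ by a common vector; the same bound $\sqrt{2(k-1)}=\Omega(\sqrt n)$ persists.
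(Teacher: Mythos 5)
Your proposal is correct, and the overall plan (exhibit two far-apart NBC bases on a ``theta-graph''-like construction and show they span an edge of the polytope) matches the paper's, but both the construction and the edge-certification step differ in genuinely interesting ways. The paper's graph adds an extra direct edge $e_0=\{t,b\}$ between the two hub vertices and uses the ``interleaved'' ordering $a_1<b_1<a_2<b_2<\cdots<e_0$, then certifies the edge by invoking \cref{fact:polytopeedges} with an explicit weight vector $w$ maximized simultaneously on $B$ and $B'$; you instead work directly with $K_{2,k}$, use the ``blocked'' ordering $a_1<\cdots<a_k<b_1<\cdots<b_k$, and certify the edge by showing the midpoint $\tfrac12(\mathbf 1_{T_\alpha}+\mathbf 1_{T_\beta})$ has a unique convex representation over NBC bases (using the standard fact that the minimal face containing a point is the convex hull of all vertices that can receive positive weight in a representation of it). Your uniqueness argument is clean and avoids the need to verify strict domination against all other NBC bases; in fact it sidesteps a small glitch in the paper, whose stated weight vector ($w_n=\tfrac{n+1}{2}$) does not actually give $\langle w,\mathbf 1_B\rangle=\langle w,\mathbf 1_{B'}\rangle$ as claimed (one should take $w_n=\tfrac{n-1}{2}$, and the resulting edge has length $\sqrt{n-1}$, not $\sqrt n$). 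One suggestion: when asserting that uniqueness of the convex decomposition of the midpoint implies $[\mathbf 1_{T_\alpha},\mathbf 1_{T_\beta}]$ is a $1$-face, it is worth citing or briefly proving the supporting fact about minimal faces, since it plays the same role that \cref{fact:polytopeedges} plays in the paper's argument.
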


\section{Preliminaries}
Given a graph $G = (V, E)$, we denote the number of independent sets of size $i$ of $G$ by $i_k (G)$
For every set $S \subseteq V$, we define $N(S) \coloneqq  \{v\notin S: \exists u \in S, \{u,v\}\in E\}$ as the set of neighbors of $S$ in $G$.
\begin{definition}[Conductance]
    Given a weighted $d$-regular graph $G=(V,E,w)$, with weights $w:E\to\R_{\geq 0}$, for $S \subseteq V$, the conductance of $S$ is defined as
    \[
    \phi(S) = \frac{w(S,\overline{S})}{d|S|},
    \]
    where  $w(S,\overline{S})$ is the sum of the weights of edges in the cut $(S,\overline{S})$. Note that since $G$ is regular, the weighted degree of every vertex is $d$.
    The conductance of  $G$ is defined as
    \[
    \phi(G) = \min_{S:|S|\leq |V|/2} \phi(S).
    \]
\end{definition}
Given a weighted graph $G=(V,E,w)$, the simple random walk is the following stochastic process: Given $X_0=v\in V$, for every $u\sim v$, we have $X_1=u$ with probability $\frac{w_{\{u,v\}}}{d_w(v)}$  and we let $P$ be the transition probability matrix of the walk.

The following theorem is well-known and follows from the easy side of the Cheeger's inequality. 
\begin{theorem}\label{thm:bottlkeneck}
For any regular graph $G=(V,E)$ and any set $S\subseteq V$ and $|S|\leq |V|/2$
$$ \frac{1-\lambda_2(P)}{2}\leq \phi(G) \leq \phi(S) \leq \frac{|N(S)|}{|S|}$$
where $1-\lambda_2(P)$ is the spectral gap of the simple random walk on $G$.
\end{theorem}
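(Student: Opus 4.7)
The plan is to prove the three inequalities separately, in order of increasing substance. The middle inequality $\phi(G) \le \phi(S)$ is immediate from the definition of $\phi(G)$ as the minimum of $\phi(S')$ over all $S'$ with $|S'|\le |V|/2$, using the hypothesis $|S|\le |V|/2$. The right inequality $\phi(S)\le |N(S)|/|S|$ follows by bounding the cut weight: every edge crossing $(S,\bar S)$ has its non-$S$ endpoint in $N(S)$, so
\[
w(S,\bar S)\ \le\ \sum_{v\in N(S)} d_w(v)\ =\ d\,|N(S)|,
\]
where $d_w(v)=d$ for every $v$ because $G$ is (weighted) $d$-regular. Dividing by $d|S|$ gives the claim.

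For the remaining inequality $(1-\lambda_2(P))/2\le \phi(G)$, the approach is the standard easy direction of Cheeger's inequality via the variational characterization of $\lambda_2$. Since $G$ is $d$-regular, the simple random walk is reversible with uniform stationary distribution $\pi\equiv 1/n$, and
\[
1-\lambda_2(P)\ =\ \min_{f:\,\sum_v f(v)=0,\ f\neq 0}\ \frac{\sum_{\{u,v\}\in E} w_{\{u,v\}}\,(f(u)-f(v))^2}{d\sum_v f(v)^2}.
\]
Let $S^*$ be the set achieving $\phi(G)$, with $|S^*|\le n/2$. Plug in the test function $f(v)=\mathbf{1}[v\in S^*]-|S^*|/n$, which has $\sum_v f(v)=0$. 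Only edges in the cut $(S^*,\bar{S^*})$ contribute a nonzero term $1$ to the numerator, so it equals $w(S^*,\bar{S^*})/d=|S^*|\phi(S^*)=|S^*|\phi(G)$. A direct computation gives the denominator $|S^*|(n-|S^*|)/n$. Hence
\[
1-\lambda_2(P)\ \le\ \frac{|S^*|\phi(G)}{|S^*|(n-|S^*|)/n}\ =\ \frac{n}{n-|S^*|}\,\phi(G)\ \le\ 2\phi(G),
\]
using $|S^*|\le n/2$ in the last step. Rearranging yields $(1-\lambda_2(P))/2\le \phi(G)$.

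There is no real obstacle here: each inequality is either definitional or a one-line Rayleigh-quotient computation, and the only thing worth tracking carefully is the normalization (uniform $\pi$ on a weighted $d$-regular graph) so that the Dirichlet form reads $\tfrac{1}{d}\sum_{\{u,v\}\in E} w_{\{u,v\}}(f(u)-f(v))^2$. Once this is set up, the test function $\mathbf{1}_{S^*}-|S^*|/n$ delivers the bound, and the constant $2$ appears precisely because $|S^*|\le n/2$.
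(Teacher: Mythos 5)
Your proof is correct, and the paper itself gives no proof of this theorem, merely remarking that it ``is well-known and follows from the easy side of Cheeger's inequality''; your argument (definition of $\phi(G)$, the trivial bound $w(S,\bar S)\le d|N(S)|$, and the Rayleigh-quotient/test-function derivation of the easy Cheeger direction with $f=\mathbf{1}_{S^*}-|S^*|/n$) is precisely the standard proof the paper is alluding to.
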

A graphic matroid $M = (E, \mathcal{I})$ is a matroid defined on the edges of a graph $G = (V, E)$ and its independent sets are all subsets of edges that do not contain any cycle. It is easy to verify that circuits of $M$ correspond to cycles of $G$.  
\begin{definition}[Matroid Truncation]
    Let $M=(E, \mathcal{I})$ be a matroid of rank $r$. The truncation of $M$ to rank $r' \leq r$ removes all independent sets of size strictly greater than $r'$. It is easy to see that the truncation of any matroid $M$ to any $r'\leq r$ is also a matroid. 
\end{definition}

Let $M'$ be the truncation to rank $r'$ of a graphic matroid of rank $r$ defined on the edges of a graph $G$. The bases of $M'$ correspond to forests with $r'$ edges and the circuits of $M'$ are the circuits of $G$ along with all spanning forests of size $r'+1$.

The following fact about polytopes follows from convexity.
\begin{fact}\label{fact:polytopeedges}For any polytope $P\subseteq \R^d$ with vertices $v_1,\dots,v_n\in \R^d$, $\{v_i,v_j\}$ is an edge of $P$ iff there exists a weight function $w\in \R^d$ such that 
$$ \langle w,v_i\rangle=\langle w,v_j\rangle > \langle w,v_k\rangle,$$
for any $k\neq i,j$. 
\end{fact}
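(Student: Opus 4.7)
The plan is to invoke the general characterization of faces of a polytope as the subsets on which some linear functional attains its maximum over $P$, together with the observation that an edge is precisely a $1$-dimensional face. Since $P=\mathrm{conv}(v_1,\dots,v_n)$, the face of $P$ maximizing a functional $\langle w,\cdot\rangle$ is exactly the convex hull of those vertices $v_k$ attaining $\max_{x\in P}\langle w,x\rangle$; this reduces both directions to a one-line argument on convex combinations.

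For the $(\Leftarrow)$ direction, I would start from a weight $w$ satisfying the hypothesis, set $c:=\langle w,v_i\rangle=\langle w,v_j\rangle$, and write any $x\in P$ as a convex combination $x=\sum_k \lambda_k v_k$. Then $\langle w,x\rangle=\sum_k \lambda_k\langle w,v_k\rangle\le c$, with equality if and only if $\lambda_k=0$ for every $k\notin\{i,j\}$. Hence the maximizer face is $\mathrm{conv}(v_i,v_j)$, a $1$-dimensional face, i.e., an edge of $P$.

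For the $(\Rightarrow)$ direction, suppose $\{v_i,v_j\}$ is an edge. Then there is a supporting hyperplane $H=\{x:\langle w,x\rangle=c\}$ with $\langle w,x\rangle\le c$ on all of $P$ and $P\cap H=\mathrm{conv}(v_i,v_j)$. This immediately gives $\langle w,v_i\rangle=\langle w,v_j\rangle=c\ge\langle w,v_k\rangle$ for every $k$. The only point requiring care is upgrading the last inequality to a strict one for $k\notin\{i,j\}$, and this follows from extremality of vertices: if $\langle w,v_k\rangle=c$ for some $k\notin\{i,j\}$, then $v_k\in P\cap H=\mathrm{conv}(v_i,v_j)$, but $v_k$ is a vertex of $P$ distinct from $v_i,v_j$ and so cannot be written as a nontrivial convex combination of them, a contradiction.

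The whole argument is a short reduction to the supporting-hyperplane characterization of faces; the only potential subtlety is the strictness upgrade in the forward direction, which is immediate from extremality of the vertices. I do not anticipate any real obstacle.
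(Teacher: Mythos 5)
The paper states this as a \emph{Fact} and offers no proof beyond the remark that it ``follows from convexity,'' so there is no paper argument to compare against; your write-up is a correct and complete rendering of that standard convexity argument. Both directions are handled properly: in $(\Leftarrow)$ you correctly observe that equality in $\sum_k\lambda_k\langle w,v_k\rangle\le c$ forces $\lambda_k=0$ for all $k\notin\{i,j\}$, so the argmax face is exactly $\mathrm{conv}(v_i,v_j)$, a one-dimensional face and hence an edge; in $(\Rightarrow)$ you correctly use that every face of a polytope is exposed to get the supporting hyperplane, and then extremality of $v_k$ to upgrade $\ge$ to $>$. No gaps.
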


\section{Results}

We start with proving  \cref{thm:edge-length}.
\THMedgelength*
\begin{proof}%
Let $n$ be odd. Consider the following graphic matroid $M$ (with $n$ edges), with the ordering $\ord$: $1<2<\dots<n$  defined by the edges of the following graph:

\begin{figure}[htb]\centering
\begin{tikzpicture}[every node/.style={minimum size = 2ex}]
\node[draw, circle] (t) at (0, 1.5) {};
\node[draw, circle] (b) at (0, -1.5) {};
\node[draw, circle] (m1) at (-2, 0) {};
\node[draw, circle] (m2) at (-1, 0) {};
\node[draw, circle] (m3) at (1, 0) {};

\draw (t) -- node[xshift = -2.2ex]{$1$} (m1) -- node[xshift = -2.2ex]{$2$} (b);
\draw (t) -- node[xshift = 1.2ex] {$3$} (m2) -- node[xshift = 1.2ex]{$4$} (b);
\draw (t) -- node[xshift = 0.3ex, anchor = west]{$n-2$} (m3) -- node[anchor = west, xshift = 0.3ex]{$n-1$} (b);
\draw[draw = none] (m2) -- node{$\ldots$} (m3);
\draw (t) .. node[xshift = 1.1ex]{$n$} controls (3,1.5) and (3, -1.5) .. (b);
\end{tikzpicture}
\label{fig:longedgegraph}
\end{figure}

We show that for $B = \{n\} \cup \{2i -1: 1\leq i\leq \frac{n-1}{2}\} $ and $B' = \{1\} \cup \{2i: 1\leq i\leq \frac{n-1}{2}\} $, $\{B, B'\}$ forms an edge in the NBC matroid base polytope denoted as $P_M$. We define a $w \in \mathbb{R}^n$ and then use \cref{fact:polytopeedges} to prove the statement.  Let $w_{n} = \frac{n+1}{2}$, and for any  $1\leq i \leq \frac{n-1}{2}$, let $w_{2i} = 1$ and $w_{2i-1} = 0$. It is easy to check that the function $\langle w, \bone_B \rangle = \langle w, \bone_{B'} \rangle = \frac{n+1}{2}$ and $\langle w, \bone_{B''} \rangle <\frac{n+1}{2} $ for all NBC basis $B'' \neq B, B'$. Therefore $\{B, B'\}$ forms and edge in $P_M$. The statements follows from the fact that $\|\mathbf{1}_B -\mathbf{1}_{B'}\|_2 =\sqrt{n}$.
\end{proof}

Next, we prove \cref{thm:opt-nbc} via a reduction from the  \textsc{MAX-INDEP-SET} problem: Given a graph $G= (V, E)$, a weight function $w: V \rightarrow \mathbb{R}_{\geq 0}$, and an integer $k$, decide whether $G$ has an independent set of weight at least $k$ or not. 

Note that independent sets of $G$ and independents sets of a BC complex/matroid are two different notions.
To complete the proof we use the following well-known hardness result.
\begin{theorem}[\cite{Karp1972}]
\textsc{MAX-INDEP-SET} is NP-complete. 
\end{theorem}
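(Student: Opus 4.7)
The plan is to prove the classical Karp result that the decision version of weighted \textsc{MAX-INDEP-SET} is NP-complete. Membership in NP is trivial: a subset $S\subseteq V$ serves as a certificate, and one can check independence and $\sum_{v\in S} w(v)\geq k$ in polynomial time. So the entire content is showing NP-hardness, and since the weighted problem generalizes the unweighted one (set every weight to $1$), it suffices to show NP-hardness for the unweighted decision version: given $G=(V,E)$ and an integer $k$, does $G$ contain an independent set of size at least $k$?

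For NP-hardness I would reduce from 3-SAT. Given a 3-CNF formula $\phi=C_1\wedge C_2\wedge\cdots\wedge C_m$ with each $C_j$ a disjunction of at most three literals, I would construct a graph $G_\phi$ with one vertex per literal occurrence, so $|V(G_\phi)|\leq 3m$. The edges come in two families: a \emph{clause gadget} making the (at most three) literals of each clause $C_j$ into a clique (so any independent set uses at most one vertex per clause), and \emph{conflict edges} joining any occurrence of a literal $x$ to any occurrence of its negation $\bar x$ (so no independent set picks both $x$ and $\bar x$). Set the threshold to $k=m$.

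The correctness argument I would then write out has two directions. For the forward direction, given a satisfying assignment, pick for each clause one literal that is true under the assignment; these $m$ vertices have no clause-gadget edges (one per clause) and no conflict edges (consistent with an assignment), so they form an independent set of size $m$. For the reverse direction, since each clause gadget is a clique, an independent set of size $m$ contains exactly one vertex per clause; reading these vertices as literals gives a consistent partial assignment (no $x$ and $\bar x$ both appear, thanks to conflict edges), which extends to a satisfying assignment of $\phi$. The reduction is clearly polynomial-time computable.

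There is no real obstacle here, as the construction is standard; the only thing worth flagging is that the problem stated in the excerpt is the weighted decision version, so I would explicitly note at the start that unweighted hardness implies weighted hardness by setting $w\equiv 1$, and note membership in NP uses that the weights are given in the input in standard binary encoding so that $\sum_{v\in S} w(v)\geq k$ can be checked in polynomial time.
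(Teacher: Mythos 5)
The paper does not actually prove this theorem; it simply cites Karp's 1972 paper, so there is no in-text proof to compare against. Your argument is the standard textbook reduction from 3-SAT to \textsc{INDEPENDENT SET} (one vertex per literal occurrence, clause cliques, conflict edges between complementary literals, threshold $k=m$), which is correct and is essentially the construction Karp gave (he phrased it as a reduction from \textsc{SAT} to \textsc{CLIQUE}; taking complements yields your version verbatim). You also correctly handle the two small details the paper's statement leaves implicit: membership in NP for weighted inputs under binary encoding, and the fact that unweighted hardness implies weighted hardness by setting $w\equiv 1$. No gaps.
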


\THMoptnbc*
\begin{proof}%
    We prove this by a reduction from \textsc{MAX-INDEP-SET}. Let $G = (V, E)$ be a graph, a vertex weight function $w: V \to \R_{\geq 0}$ and $k$ an integer. Construct a new graph $G' = (V', E')$ from $G$ by first copying $G$ and then adding a new vertex $z$ and edges $e_v = \{z, v\}$ for all $v \in V$. We define $w': E' \to \R_{\geq 0 }$ as $w'(e_v) = w(v)$ for every $v \in V$, and $w'(e) = 0$ for every $e \in E$. 
    Moreover, consider the following total ordering $\ord$ on $E'$:
    $$E < \{e_v: v\in V\},$$ where the ordering within each set is arbitrary.
    Let $M$ be the graphic matroid defined by the edges of $G'$, we will be look at bases/independent sets of $\NBC( M, \ord)$.

    \begin{claim}
        There exists an independent set of $G$ of weight at least $k$ iff there exists an NBC basis of $M$ with weight at least $k$.
    \end{claim}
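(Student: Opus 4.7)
The plan is to establish a weight-preserving correspondence between the independent sets of $G$ and the ``star parts'' of NBC bases of $M$. For any basis $B$ of $M$ (i.e., any spanning tree of $G'$), let $S_B \coloneqq \{v \in V : e_v \in B\}$; since $w'$ vanishes on $E$, the total weight of $B$ equals $w(S_B) = \sum_{v \in S_B} w(v)$. I would prove (a) that $S_B$ is an independent set of $G$ whenever $B$ is NBC, and (b) that for every independent set $I$ of $G$ there exists an NBC basis $B$ with $I \subseteq S_B$. Together with the nonnegativity of $w$, these two facts give $\max_{B \in \NBC(M,\ord)} w(B) = \max_{I \text{ IS of } G} w(I)$, which is exactly the equivalence in the claim.

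Step (a) would follow immediately from the ordering $\ord$. If $u, v \in S_B$ with $\{u,v\} \in E$, then the triangle $\{z,u,v\}$ in $G'$ produces the circuit $C = \{e_u, e_v, \{u,v\}\}$, whose minimum element under $\ord$ is $\{u,v\}$ because all of $E$ precedes the star edges. Hence $\{e_u, e_v\}$ is a broken circuit, and it is contained in $B$, contradicting NBC-ness.

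For step (b), I would first check that $F_I \coloneqq \{e_v : v \in I\}$ is itself an NBC independent set of $M$; once this is done, purity (\cref{fact:purity}) lets me extend $F_I$ to an NBC basis $B$ with $S_B \supseteq I$. The heart of (b) is a short case analysis on the simple cycles of $G'$ (which are exactly the circuits of $M$). Any such cycle either (i) lies entirely in $G$, in which case its broken circuit is a nonempty subset of $E$ and cannot sit inside the star-edge set $F_I$; or (ii) passes through $z$, using exactly two star edges $e_u, e_v$ together with a simple $G$-path from $u$ to $v$. Because $E < \{e_v\}$, the minimum element of every such mixed circuit is an $E$-edge, so the broken circuit retains at least one $E$-edge unless the $G$-path has length exactly one. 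That exceptional case is the triangle $\{z,u,v\}$ with broken circuit $\{e_u, e_v\}$, and containment $\{e_u, e_v\} \subseteq F_I$ would force $u, v \in I$ with $\{u,v\} \in E$, contradicting the independence of $I$ in $G$. Thus $F_I$ is NBC.

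The main obstacle is the case analysis in (b): one has to be confident that no longer cycle through $z$ contributes a ``sneaky'' broken circuit supported only on star edges. The placement $E < \{e_v\}$ in $\ord$ is precisely what forces the minimum of every mixed circuit to be an $E$-edge, so the only all-star broken circuits are the triangular ones $\{e_u, e_v\}$, and these are exactly blocked by the independence of $I$. Once (a) and (b) are in place, the equality of maxima is immediate, and the claim follows by setting the threshold to $k$.
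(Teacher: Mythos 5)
Your proof is correct and takes essentially the same route as the paper's: both directions rest on the observation that any circuit through $z$ has its smallest element in $E$, so the only broken circuits that can live inside a set of star edges $\{e_v\}$ are the pairs $\{e_u,e_v\}$ coming from triangles $\{z,u,v\}$ with $\{u,v\}\in E$, plus an appeal to \cref{fact:purity} to extend the NBC independent set $F_I$ to a basis. The only difference is presentational: where the paper says ``it is not hard to see that $C\setminus\{e\}=\{e_v,e_{v'}\}$,'' you spell out the case analysis on cycles of $G'$ (cycles inside $G$ versus cycles through $z$), which is a welcome elaboration but not a different argument.
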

    
    We prove the claim in a straightforward manner. Suppose there is an independent set $I \subseteq V$ of $G$ with $w(I) \geq k$ and consider the set $I' \subseteq E'$ defined by $I' = \{e_v \, : \, v \in I\}$. By definition, $w'(I') \geq k$. We argue that $I'$ does not contain any broken circuit. Assume otherwise that there is a broken circuit $ C \setminus \{e\} \subseteq I'$. Since $C$ corresponds to a cycle in $G'$ and $C\setminus \{e\}$ is contained in $I'$, it is not hard to see that $C \setminus \{e\} = \{e_v, e_{v'}\}$ for some $v , v' \in I$ and $e = \{v, v'\}$ is an edge in $G$. But this is a contradiction with the fact that $I$ is an independent set of $G$. 
    Hence $I'$ is a NBC independent set. Since the broken circuit complex is pure (see \cref{fact:purity}), there exists an NBC basis $B$ containing $I'$ which has weight $w'(B) \geq w'(I') \geq k$. 
    
    For the other direction, suppose we have a NBC basis $B' \subseteq E'$ of weight  $w'(k) \geq k$, and define $I \subseteq V$ by $I = \{v \, : \, e_v \in B'\}$. Since all edges coming from $E$ have zero weight,  $w(I) = w'(B') \geq k$ . To see that $I$ is an independent set of $G'$, note that if there is an edge $\{v, v'\}$ for some $v, v' \in I$, we have $e_{v}, e_{v'} \in B'$, then $\{e_{v}, e_{v'}\}$ forms a broken circuit according to the ordering $\ord$. Therefore $I$ is an independent set of $G$ of weight at least $k$. 
\end{proof}

It's important to note that the above proof works under the crucial assumption that the order $\ord$ is chosen carefully based on the weights (and in some sense in the same order of the weights).  %

We can amplify the ideas in the previous construction to also argue \cref{thm:nbc-contract-down-up}. This is done by constructing a Broken Circuit complex for which the down-up walk of a carefully chosen link has inverse exponentially small spectral gap.

\begin{figure}[htb]\centering
\begin{tikzpicture}[every node/.style={minimum size = 2ex}]
\node[draw, circle] (z) at (0, 0) {$z$};
\node[draw, circle] (y) at (2, 0) {$y$};
\draw (y) -- node[anchor = south] {$e_0$} (z);

\node[draw, circle, inner sep = 0pt] (zv1) at (-3, -2) {$z_{v, 1}$};
\node[draw, circle, inner sep = 0pt] (zvl) at (-1, -2) {$z_{v, \ell}$};
\draw (z) -- node[shift = {(-1ex, -0.5ex)}, anchor = east]{$e_{v, 1}$} (zv1);
\draw (z) -- node[yshift = -0.5ex, anchor = east]{$e_{v, \ell}$} (zvl);
\draw[draw = none] (zv1) -- node{$\ldots$} (zvl);

\node[draw, circle, inner sep = 0pt] (zu1) at (1, -2) {$z_{u, 1}$};
\node[draw, circle, inner sep = 0pt] (zul) at (3, -2) {$z_{u, \ell}$};
\draw (z) -- node[yshift = -0.5ex, anchor = west]{$e_{u, 1}$} (zu1);
\draw (z) -- node[shift = {(0.5ex, -0.5ex)}, anchor = west]{$e_{u, \ell}$} (zul);
\draw[draw = none] (zu1) -- node{$\ldots$} (zul);

\draw[dotted] (0, -4) ellipse (4 and 1) node[yshift = -4ex] {$G$};
\node[draw, circle] (v) at (-2, -4) {$v$};
\draw (zv1) -- node[yshift = 1ex, anchor = east]{$f_{v, 1}$} (v);
\draw (zvl) -- node[shift = {(0.75ex, 1ex)}, anchor = east]{$f_{v, \ell}$} (v);

\node[draw, circle] (u) at (2, -4) {$u$};
\draw (zu1) -- node[shift = {(-0.75ex, 1ex)}, anchor = west]{$f_{u, 1}$} (u);
\draw (zul) -- node[yshift = 1ex, anchor = west]{$f_{u, \ell}$} (u);

\draw[draw = none] (u) -- node{$\ldots$} (v);
\end{tikzpicture}

\caption{A schematic of the graph $G'$ in the proofs of \cref{thm:nbc-contract-down-up} and \cref{thm:nbc-contract-count} where $G=K_{n,n}$ is the complete bipartite graph in the former and it is a hard instance of $\sharp\textsc{INDEP-SET-INC}(7, \frac{2}{19})$ in latter. }

\label{fig:nbcgenericconstruction}
\end{figure}
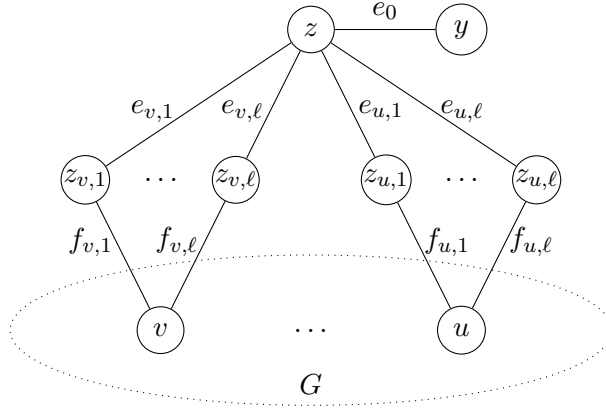

\THMnbccontractdownup*
\begin{proof} %
    Take the complete bipartite graph $G = K_{n, n}= (A,B, E=A\times B)$ , with  $|A|=|B|=n$. Also, let $V=A\cup B$.  Let $\ell\geq 1$ be a parameter that we choose later, and construct a new graph 
    $$G' = \left(V'=V\cup\{y,z\} \cup \{z_{ v, i}: v\in V,i\in [\ell]\}, E'=E\cup \{e_0\}\cup \left \{e_{v,i},f_{v,i} :v\in V,i\in [\ell]\right\}\right)$$
    where $e_0=\{y,z\}, e_{v,i}=\{z,z_{ v, i}\}, f_{v,i}=\{z_{v,i},v\}$ (see \cref{fig:nbcgenericconstruction}).
For a sanity check, note that $|V| = 2n$ and $|V'|=2\ell n + 2n+2$.
    
    Let $M = (E', \mathcal{I})$ be the graphic matroid defined by $G'$  truncated  to rank $2\ell n+ n +1$, i.e., the bases of $M$ are forests of $G'$ with exactly $2\ell n+n+1$ edges. Now, consider the following total ordering $\ord$ on $E'$:  
    $$ e_0 < E < \{e_{v,i}: v\in V,i\in [\ell]\} < \{f_{v,i}: v\in V, i\in [\ell]\},$$
    where the ordering within each set is arbitrary.
    Moreover, let $X \coloneqq \NBC( M, \ord)$, and define 
    $$\tau = \{e_{v, i}: v \in V, i \in [\ell]\}.$$ 
   For simplicity of notation, let $F_A \coloneqq \{f_{v, i}: v \in A, i\in [\ell]\}$ and  $F_B \coloneqq \{f_{v, i}: v \in B, i\in [\ell]\}$.
   \begin{claim}
       For any facet $S$ of $X_\tau$, either $S \cap F_A = \emptyset$, or $S \cap F_B = \emptyset$,
   \end{claim}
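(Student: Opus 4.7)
The plan is to exhibit a concrete family of broken circuits arising from the $5$-cycles of $G'$ that pass through the gadget, and then observe that $\tau$ already supplies all the ``middle'' edges of every such $5$-cycle, so that any facet of $X_\tau$ must avoid either all of $F_A$ or all of $F_B$.

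Concretely, I would first fix arbitrary $v \in A$, $u \in B$, $i, j \in [\ell]$ and observe that the five edges $\{v,u\}, f_{v,i}, e_{v,i}, e_{u,j}, f_{u,j}$ traverse the cycle $v \to z_{v,i} \to z \to z_{u,j} \to u \to v$ in $G'$; since $G = K_{n,n}$ the edge $\{v,u\}$ indeed exists. Hence this $5$-element set $C$ is a circuit of the graphic matroid of $G'$. I would then check that $C$ remains a circuit in the truncation $M$: its size $5$ is at most $r + 1 = 2\ell n + n + 2$ (for any $n, \ell \geq 1$), and the truncation preserves circuits of size $\leq r + 1$. Next, under the ordering
\[
e_0 \;<\; E \;<\; \{e_{v,i}\} \;<\; \{f_{v,i}\},
\]
the unique element of $C$ lying in $E$ is $\{v,u\}$, which is therefore the minimum element of $C$. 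Removing it yields the broken circuit
\[
B_{v,i,u,j} \;=\; \{f_{v,i},\, e_{v,i},\, e_{u,j},\, f_{u,j}\}.
\]

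With this broken circuit in hand, suppose for contradiction that some facet $S$ of $X_\tau$ meets both $F_A$ and $F_B$, say $f_{v,i} \in S$ with $v \in A$ and $f_{u,j} \in S$ with $u \in B$. By definition of the link, $\tau \cup S$ is an NBC basis of $M$; in particular $\tau \cup S$ is an NBC independent set. But $\tau$ contains every edge $e_{w,k}$, so $\tau \cup S \supseteq \{e_{v,i}, e_{u,j}\} \cup \{f_{v,i}, f_{u,j}\} = B_{v,i,u,j}$, contradicting the fact that $\tau \cup S$ contains no broken circuit. Therefore $S \cap F_A = \emptyset$ or $S \cap F_B = \emptyset$, as claimed.

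The only subtle point is confirming the status of $C$ as a circuit after truncation, and confirming that the ordering really places $\{v,u\}$ as the minimum of $C$; both are immediate from the definitions, so I do not anticipate a real obstacle. The argument relies crucially on two features of the construction: that every cross edge of $K_{n,n}$ is actually present (so the ``$A$-side'' and ``$B$-side'' gadgets are joined by a genuine short cycle), and that $E$ comes before the $e$-level and $f$-level edges in $\ord$ (so that the edge of $G$ is always the one removed to form the broken circuit).
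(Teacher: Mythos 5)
Your proof is correct and follows the same reasoning the paper sketches in one sentence — you simply make explicit the $5$-cycle $v\,$-$\,z_{v,i}\,$-$\,z\,$-$\,z_{u,j}\,$-$\,u\,$-$\,v$, verify it survives truncation, and check that $\{v,u\}\in E$ is the minimum under $\ord$ so that the resulting broken circuit lands entirely inside $\tau\cup S$. Nothing differs in substance; yours is just a fully spelled-out version of the paper's argument.
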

    This follows from the fact that $G$ is a complete bipartite graph and edges in $E$ are smaller than $e_{v,i}$'s and $f_{u,j}$'s; so if $S \cap F_A, S\cap F_B\neq\emptyset$, then it has a broken circuit.
    
    Therefore, the set of facets of $X_\tau$ can be partitioned into $2n+1$ sets $(\cup_{i=1}^n \mathcal{S}_{A, i}) \cup (\cup_{i=1}^n \mathcal{S}_{B, i}) \cup \mathcal{S}_{0}$, where  $\mathcal{S}_{A, i}$ is the set of all facets $S$ with $|S \cap F_A| = i$,  $\mathcal{S}_{B, i}$ is the set of all facets $S$ with $|S \cap F_B| = i$, and $\mathcal{S}_{0}$ is the set of all facets with $|S \cap (F_A \cup F_B)| = 0$. %
    Let $\mathcal{S}_A \coloneqq \cup_{i=1}^n \mathcal{S}_{A, i}$ and similarly define ${\cal S}_B$. We show that $\frac{|N(\mathcal{S}_A)|}{|\mathcal{S}_A|} \leq n^{-\Omega(n)}$, where $N({\cal S}_A)$ is the set of neighbors of ${\cal S}_A$ in the down-up walk $P^\vee_\tau$ on the facets of $\tau$. WLOG we can assume that $|\mathcal{S}_A|$ is at most half of all facets. Applying \cref{thm:bottlkeneck}, this would imply that $1 - \lambda_2(P^{\vee}_\tau)  \leq n^{-\Omega(n)}$. %
    
    First, note that for every  facet $S \in  \mathcal{S}_{A}$ and $T \in \mathcal{S}_{B}\smallsetminus {\cal S}_{B,1}$, we get $P^{\vee}(S, T) = 0$ since $|S\Delta T|>2$. So, $N(\mathcal{S}_A) \subseteq \mathcal{S}_{B, 1} \cup \mathcal{S}_{0}$. 
    First, notice $|\mathcal{S}_{0}|\leq {|E|\choose n}\leq  n^{2n}$. 
    Furthermore,
    $ |\mathcal{S}_{B, 1}| \leq {n \choose 1}\ell {|E| \choose n-1} \leq \ell n^{2n}$. 
    This follows from the fact that  any facet in $\mathcal{S}_{B, 1}$ can be written as $\{f_{v, i_v}\} \cup \{e_0\} \cup K$ for some $v \in A$,  $i_v \in [\ell]$, and subset $K \subseteq E$ of size $n-1$. 
    Lastly, $|\mathcal{S}_A| \geq |\mathcal{S}_{A, n}| = \ell^{n}$. 
    This is because every choice of $\{i_v\}_{v \in A}$ corresponds to a set in $\mathcal{S}_{A, n}$ whose sets are of the form $\{f_{v, i_v}: v \in V\} \cup \{e_0\}$. These sets all don't contain a broken circuit because the circuits introduced through truncation are exactly the forests with $2\ell n + n + 2$ edges. However, any proper superset of $\{f_{v, i_v}: v \in V\} \cup \{e_0\}$ must include $e_0$, so looking at the circuit introduced by the superset, the corresponding broken circuit will always remove $e_0$.
    Putting it all together, $$1-\lambda_2(P^\vee_\tau)\leq \frac{|N(\mathcal{S}_A)|}{|\mathcal{S}_A|} \leq \frac{n^{2n}(1+\ell)}{\ell^n} \underset{\text{assuming }\ell\geq n^3}{\leq} n^{-\Omega(n)}.$$ as desired.

\end{proof}
We prove \cref{thm:nbc-field-count} and \cref{thm:nbc-contract-count} by a reduction from  $\sharp\textsc{INDEP-SET-INC}(7, \frac{2}{19})$, defined as the following. 
\begin{definition}[$\sharp\textsc{INDEP-SET-INC}(7, \frac{2}{19})$]\label{def:INDEPSETINC}
Given a $7$-regular graph $G = (V, E)$ that satisfies $i_{k} (G) \leq i_{\lfloor \frac{2|V|}{19}\rfloor} (G)  $  for any $k<\lfloor \frac{2|V|}{19}\rfloor $, where $i_k(G)$ are the independent sets of $G$ of size $k$, count the number of independent sets of size $\lfloor \frac{2|V|}{19}\rfloor $. 
\end{definition}
\begin{theorem}\label{thm:NPindpset}
        Unless $\mathrm{NP}=\mathrm{RP}$, there is no randomized algorithm with constant approximation ratio for $\sharp\textsc{INDEP-SET-INC}(7, \frac{2}{19})$.
\end{theorem}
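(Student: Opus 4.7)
The plan is to reduce from the Sly--Sun inapproximability theorem: for any $\Delta \geq 3$ and any activity $\lambda > \lambda_c(\Delta)$, where $\lambda_c(\Delta) = (\Delta-1)^{\Delta-1}/(\Delta-2)^\Delta$, it is NP-hard to approximate the hard-core partition function $Z_G(\lambda) = \sum_I \lambda^{|I|}$ on $\Delta$-regular graphs within any constant factor unless $\mathrm{NP} = \mathrm{RP}$. Since $\lambda_c(7) = 6^6/5^7 < 1$, hardness applies on $7$-regular instances at activities producing non-trivial densities, so the target degree matches the statement.

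First I would fix an activity $\lambda^*$ just above $\lambda_c(7)$ for which the hard-core Gibbs measure concentrates on independent sets of expected density $\alpha(\lambda^*)$, and tune $\lambda^*$ so that $\alpha(\lambda^*) = 2/19$. On the Sly--Sun bipartite random $7$-regular instances, standard concentration yields
\[
(\lambda^*)^{k^*} \cdot i_{k^*}(G) \;\leq\; Z_G(\lambda^*) \;\leq\; n \cdot (\lambda^*)^{k^*} \cdot i_{k^*}(G),
\]
where $k^* = \lfloor 2n/19 \rfloor$. Hence any constant-factor approximation of $i_{k^*}(G)$ yields a polynomial-factor (and, after squaring, a constant-factor) approximation of $Z_G(\lambda^*)$, contradicting the assumed hardness.

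Second, I would verify the monotonicity premise of $\sharp\textsc{INDEP-SET-INC}$. Because the Sly--Sun instances exhibit a unimodal sequence $i_0(G), i_1(G), \dots$ peaked near $k^*$ (the mode matches the Gibbs-concentrated density), the inequality $i_k(G) \leq i_{k^*}(G)$ for $k < k^*$ is automatic. If in a given instance unimodality fails marginally, one can disjointly pad with a small $7$-regular component whose independent-set profile straightens the sequence without changing the hardness reduction.

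The principal obstacle is the arithmetic: producing a $7$-regular hard instance whose mode lands \emph{exactly} at $\lfloor 2n/19 \rfloor$. A clean way to sidestep this is to prove the stronger statement that hardness holds at every rational density in a small neighborhood of $2/19$ (the Bethe recursion at $\Delta = 7$ is continuous in $\lambda^*$ and sweeps out an interval of densities), so hitting $2/19$ on the nose is unnecessary. The remainder is routine bookkeeping from the phase-transition inapproximability literature (Sly, Sly--Sun, Galanis--\v{S}tefankovi\v{c}--Vigoda).
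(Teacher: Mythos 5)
Your high-level strategy is the right one and matches the paper's: both reduce from hardness of approximating the hard-core partition function on $7$-regular graphs at some $\lambda > \lambda_c(7) = 6^6/5^7 \approx 0.6$. However, your reduction has concrete gaps that the paper's construction is designed precisely to close, and I don't think they can be filled the way you sketch.

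First, the two-sided bound $(\lambda^*)^{k^*} i_{k^*}(G) \leq Z_G(\lambda^*) \leq n (\lambda^*)^{k^*} i_{k^*}(G)$ is \emph{not} given by "standard concentration." The lower bound is one term of the sum, so it is free; the upper bound requires that $k \mapsto \lambda^k i_k(G)$ is maximized at $k^*$. But the hardness of $\sharp\textsc{HC}(7,\lambda)$ is a worst-case statement over all $7$-regular graphs; you cannot assume the hard instances are unimodal or have a mode at a density you choose. You would need to prove hardness on a restricted family that \emph{does} have this property, which is exactly the nontrivial work. Second, and separately, the promise of $\sharp\textsc{INDEP-SET-INC}$ is $i_k \leq i_{\lfloor 2n/19 \rfloor}$ for $k < \lfloor 2n/19 \rfloor$, a statement about the unweighted sequence $i_k$, not $\lambda^k i_k$ — so even if your concentration held, it would not by itself discharge the promise. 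Your "pad with a small $7$-regular component that straightens the sequence" is the right instinct, but it is stated vaguely; you never identify a gadget whose independent-set counts you control. Third, tuning $\lambda^*$ so that the Bethe-recursion density hits $2/19$ does not place the mode of $i_k$ on a worst-case instance at $\lfloor 2n/19 \rfloor$; that density is a property of random regular graphs, not of the graph in hand.

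The paper resolves all three issues with a single deterministic gadget: take the disjoint union of $G$ with $r = \Theta(n^2/\epsilon)$ copies of $K_8$. Then $i_k(G') = \sum_j i_j(G)\,\binom{r}{k-j} 8^{k-j}$, and the crucial observation is that around $k = \lfloor \tfrac{2}{19}|V'| \rfloor \approx \tfrac{16}{19}r$ the ratio $\binom{r}{k-1}8^{k-1}/\binom{r}{k}8^k \approx \tfrac{k}{8(r-k)} \to \tfrac{2}{3}$, so the padding's count profile \emph{mimics} the geometric weights $\lambda^j$ with $\lambda = 2/3 > \lambda_c(7)$, giving $i_{\lfloor 2|V'|/19 \rfloor}(G') \approx \binom{r}{\lfloor\cdot\rfloor}8^{\lfloor\cdot\rfloor} \cdot Z_G(2/3)$ up to $e^{\pm\epsilon/2}$. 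The same explicit binomial structure simultaneously yields the monotonicity promise: for each fixed $S \subseteq V(G)$, the number of extensions $\binom{r}{k-|S|}8^{k-|S|}$ is increasing in $k$ up to the target size, since $8(r-k)/k \geq 1$ in that range. This is the missing concrete mechanism in your proposal. If you want to salvage your route, you would need to replace the appeal to "Sly--Sun instances" by an explicit padded family with provable unimodality — at which point you are reconstructing the $rK_8$ gadget.
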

We leave the proof of this for the appendix.
Now, we are ready to prove \cref{thm:nbc-contract-count}. The high-level structure of the proof is similar to the proof of \cref{thm:nbc-contract-down-up} where we apply a similar gadget to graphs on which it is hard to count independent sets (as opposed to the complete bipartite graph).

\THMnbccontractcount*
\begin{proof}%
For simplicity of notion, let $\alpha \coloneqq \frac{2}{19}$. 
    We prove by a reduction from  $\sharp\textsc{INDEP-SET-INC}(7, \frac{2}{19})$. Take any arbitrary $7$-regular graph $G = (V, E)$ whose number of independent sets of size $\lfloor \alpha|V|\rfloor$ is at least  the number of its independent sets of size $k$  for any $k<\lfloor \alpha|V|\rfloor$. Let $n \coloneqq |V|$ and $N$ be the number of independent sets of size $\lfloor\alpha n\rfloor$ of $G$. Also, define  $\ell \geq 1$ to be a parameter that we choose later. %
    
    Now, construct a new graph
    $$G' = \left(V'=V\cup\{y,z\} \cup \{z_{ v, i}: v\in V,i\in [\ell]\}, E'=E\cup \{e_0\}\cup \left \{e_{v,i},f_{v,i}:v\in V,i\in [\ell]\right\}\right)$$
    where $e_0=\{y,z\}, e_{v,i}=\{z,z_{ v, i}\}, f_{v,i}=\{z_{v,i},v\}$ (see \cref{fig:nbcgenericconstruction}).    
   Let $M = (E', \mathcal{I})$ be the graphic matroid defined by $G$  truncated  at rank $\ell n+\lfloor\alpha n\rfloor+1$, i.e., the bases of $M$ are forests of $G'$ with exactly $\ell n+\lfloor\alpha n\rfloor+1$ edges. Now, consider the following ordering $\ord$ on $E'$:  
    $$ e_0 < E < \{e_{v,i}: v\in V,i\in [\ell]\} < \{f_{v,i}: v\in V, i\in [\ell]\},$$
    where the ordering within each set is arbitrary.  Moreover, let $X \coloneqq \NBC( M, \ord)$, and define 
    $$\tau = \{e_{v, i}: v \in V, i \in [\ell]\}.$$ 
  We claim that the number of facets of $X_\tau$ is at least $\ell^{\lfloor\alpha n\rfloor} N$ and at most  $2\ell^{\lfloor\alpha n \rfloor}  N$. 
  So, a $1.5$-approximation to the number facets of $X_\tau$, i.e., the number NBC bases of $M$ that contain $\tau$, gives a $3$-approximation to $N$, the number of independent sets of size $\lfloor \alpha n\rfloor$ of $G$.
  
  We use the following crucial observation:
  \begin{claim} For any facet $S$ of $X_\tau$, $\{v: \exists f_{v,i}\in S\}$ is an independent set of $G$ and for any $f_{v, i}, f_{v, j} \in S$ we have $i=j$.
  
  Conversely, for any $S\subseteq \{ f_{v, i}: v \in V, i \in [\ell]\}$, such that the set $\{v: \exists f_{v, i}\in S\}$ is an independent set of size $\lfloor\alpha n\rfloor$ of $G$, and  $f_{v, i}, f_{v, j} \in S\implies i=j$, we have $S\cup \{e_0\}$ is a facet of $X_\tau$. 
  \end{claim}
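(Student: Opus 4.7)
The plan is to prove both implications by exploiting the cycle structure of $G'$ together with the specific ordering $\ord$. I would begin with a few structural observations: (i) any facet of $X_\tau$ is disjoint from $\tau$ and has size $\lfloor \alpha n\rfloor + 1$, since facets of $X$ (NBC bases of the truncated matroid) have size $\ell n + \lfloor \alpha n \rfloor + 1$; (ii) the edge $e_0 = \{y, z\}$ lies on no cycle of $G'$, because $y$ has degree one; (iii) every other cycle of $G'$ is either entirely inside $G$ or has the form $\{e_{v,i}, f_{v,i}\} \cup P \cup \{f_{v',i'}, e_{v',i'}\}$ for a (possibly empty) path $P$ from $v$ to $v'$ in $G$, since any cycle touching $z$ must enter and leave through some pair $e_{v,i}, e_{v',i'}$, each forcing its partner $f$-edge through the degree-$2$ vertex $z_{v,i}$; and (iv) the ordering puts $e_0$ first, then the edges of $E$, then $\tau$, then the $f$-edges.

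For the forward direction, suppose $S$ is a facet of $X_\tau$. If $f_{v,i}, f_{v,j} \in S$ with $i \neq j$, then the $4$-cycle $\{e_{v,i}, f_{v,i}, f_{v,j}, e_{v,j}\}$ has its smallest edge in $\tau$ (one of $e_{v,i}, e_{v,j}$), so removing it yields a broken circuit contained in $S \cup \tau$, a contradiction. If instead $f_{v,i}, f_{v',i'} \in S$ with $\{v,v'\} \in E$, then the cycle $\{f_{v,i}, e_{v,i}, \{v,v'\}, e_{v',i'}, f_{v',i'}\}$ has smallest element $\{v,v'\} \in E$, and removing it produces the broken circuit $\{f_{v,i}, e_{v,i}, e_{v',i'}, f_{v',i'}\} \subseteq S \cup \tau$, again a contradiction. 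These two arguments together give both conclusions in the forward direction.

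For the converse, given $S \subseteq \{f_{v,i}\}$ satisfying the hypotheses, I set $T \coloneqq S \cup \{e_0\} \cup \tau$ and must verify $T$ is an NBC base of $M$. For independence, note that $\tau \cup \{e_0\}$ is a star at $z$ with leaves $\{y\} \cup \{z_{v,i}\}$, and each $f_{v,i} \in S$ uses a distinct vertex $v$ (by the single-$i$-per-$v$ hypothesis), so adjoining the $f$-edges attaches fresh vertices to this star and yields a forest of the correct size $\ell n + \lfloor \alpha n \rfloor + 1$. To rule out broken circuits, I would enumerate the circuit types of $M$: cycles of $G$ give broken circuits sitting inside $E$, which $T$ avoids entirely; the length-$4$ $z$-cycles would need two $f$-edges at a single vertex, excluded by hypothesis; $z$-cycles with $|P|\geq 2$ produce broken circuits containing some interior edge of $P \subseteq E$, again absent from $T$; and $z$-cycles with $|P|=1$ produce broken circuits requiring $f$-edges at adjacent vertices of $G$, contradicting independence of $\{v : \exists f_{v,i} \in S\}$.

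The most delicate step, and what I expect to be the main obstacle, is handling the extra circuits introduced by the truncation, namely forests of $G'$ of size exactly $\ell n + \lfloor \alpha n \rfloor + 2$. For such a forest-circuit $C$, any containment $C \setminus \{\min C\} \subseteq T$ forces equality by cardinality, which I claim is impossible: if $e_0 \in C$ then $\min C = e_0$, so $C \setminus \{\min C\}$ omits $e_0$ while $T$ contains it; and if $e_0 \notin C$ then $e_0 \notin C \setminus \{\min C\}$, the same contradiction. This is where the choice of $e_0$ as the global minimum of $\ord$ pays off. Concluding, $T$ is an NBC base, so $T \setminus \tau = S \cup \{e_0\}$ is a facet of $X_\tau$, completing the claim.
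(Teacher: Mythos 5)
Your proof is correct and takes the same approach as the paper, which compresses the entire argument into a single remark that the ordering puts $E$ below the $e_{v,i}$'s and $f_{u,j}$'s. Your explicit handling of the truncation circuits (forests of size $\ell n + \lfloor\alpha n\rfloor + 2$) via the fact that $e_0$ is the global minimum of $\ord$ is a step the paper's one-line proof glosses over, but it is genuinely needed for the converse direction and you identify and close it correctly.
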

  The proof simply follows from the fact that edges of $E$ are smaller than $e_{v,i}$'s, and $f_{u,j}'s$ in $\ord$.
  By the second part of the claim, we can write 
    \begin{align}\label{eq:reduction1}
           |X_\tau (\lfloor\alpha n\rfloor+1)| &= \ell^{ \lfloor\alpha n\rfloor}  N +  |\{ S \in X_\tau (\lfloor\alpha n\rfloor+1): S \cap E \neq \emptyset \} | \geq \ell^{\lfloor \alpha n\rfloor} N.
    \end{align}
 Define $i_k \coloneqq i_k (G)$ as the number of independent sets of size $k$ of graph $G$. By the first part of the above claim we can write,
    \begin{align}\label{eq:bound-bad-bases}
  |\{ S \in X_\tau (\lfloor\alpha n\rfloor+1): S \cap E \neq \emptyset \} | &\leq \sum_{k=0}^{\lfloor\alpha n\rfloor-1} \ell^k \cdot i_k\cdot {|E| \choose \lfloor\alpha n\rfloor-k} \leq \sum_{k=0}^{\lfloor\alpha n\rfloor-1} \ell^k \cdot i_k\cdot |E| %
  ^{\lfloor\alpha n\rfloor-k}   \\
    &\underset{\text{using }i_k\leq N}\leq N|E|^{\lfloor\alpha n\rfloor}\sum_{k=0}^{\lfloor \alpha n\rfloor-1} (\ell/|E|)^{k}\\
  &\underset{\text{assuming }\ell\geq 2|E|}{\leq} N |E|^{\lfloor\alpha n\rfloor} (\ell/|E|)^{\lfloor \alpha n\rfloor}\leq N\ell^{\lfloor \alpha n\rfloor}
       \end{align}
Putting these together with \eqref{eq:reduction1} concludes the proof.
\end{proof}

\THMnbcfieldcount*
\begin{proof}%
For simplicity of notion, let $\alpha \coloneqq \frac{2}{19}$. 
    The proof is similar to the proof of \cref{thm:nbc-contract-count} by a reduction from  $\sharp\textsc{INDEP-SET-INC}(7, \frac{2}{19})$. Take any arbitrary $7$-regular graph $G = (V, E)$ with $n:=|V|$ vertices whose number of independent sets of size $\lfloor \alpha|V|\rfloor$ is at least  the number of its independent sets of size $k$  for any $k<\lfloor \alpha|V|\rfloor$.
    Construct a new graph 
    $$G' = (V'=V\cup \{y,z\}, E'=E\cup \{e_0=\{y,z\}\}\cup\{e_v=\{v,z\}: v\in V\})$$ 
    Let $M = (E', \mathcal{I})$ be the graphic matroid given by $G'$ truncated to rank $ \lfloor\alpha n\rfloor +1$ and consider the following ordering $\ord$ on $E''$:  
    $$ e_0 < E < \{e_{v}: v\in V\},$$
    where as usual the ordering within each set is arbitrary. Define weights $\lambda: E'\to\R_{\geq 0}$ as follows:
    $$\lambda_{e} = \begin{cases} \ell& \text{if }e=e_v \text{ for some $v\in V$},\\
    1&\text{o.w.}\end{cases},$$
    for some $\ell$ that we choose later.
    We argue that 
    $$\lambda^{\lfloor\alpha n\rfloor}  N \leq  \sum_{B} \prod_{e \in B} \lambda_e \leq 2\lambda^{\lfloor\alpha n \rfloor}  N.$$ 
    where here (and henceforth) the sum is over $B$'s that are NBC bases of $M$, and therefore a $1.5$-approximation to the partition function, i.e., the quantity in the middle, is a 3-approximation to $N$. Similar to the previous theorem we have the following claim.
    \begin{claim}
        For any NBC base $B$ of $M$, we have $\{v: e_v\in B\}$ is an independent set of $G$. 
        Conversely, for any independent set $I$ of $G$ of size $|I|=\lfloor\alpha n\rfloor$, $\{e_0\}\cup\{e_v: v\in I\}$ is a NBC base of $M$.
    \end{claim}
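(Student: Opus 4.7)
The plan is to mirror the proof of \cref{thm:nbc-contract-count}: show that the partition function $Z:=\sum_B\prod_{e\in B}\lambda_e$ is sandwiched as $N\ell^{\lfloor\alpha n\rfloor}\le Z\le 2N\ell^{\lfloor\alpha n\rfloor}$ for a suitable $\ell=\Theta(|E|)=O(n)$. Since $|E|=O(n)$ for a $7$-regular graph, the constraint $\lambda_e\le O(n)$ is met, and any constant-factor approximation to $Z$ would yield a constant-factor approximation to $N$, contradicting \cref{thm:NPindpset}.

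First I would finish the claim. The forward direction is identical to \cref{thm:nbc-contract-count}: if $e_u,e_v\in B$ with $\{u,v\}\in E$, the triangle $\{e_u,e_v,\{u,v\}\}$ is a circuit whose $\ord$-minimum is $\{u,v\}\in E$, so $\{e_u,e_v\}$ is a broken circuit contained in $B$. For the converse, $\{e_0\}\cup\{e_v:v\in I\}$ is a star in $G'$ (hence a forest of size $\lfloor\alpha n\rfloor+1$), and contains no broken circuit because every cycle of $G'$ involving the edges $e_v$ must traverse at least one edge of $E$ (which is smaller under $\ord$ and absent from $B$), and any truncation circuit of $M$ containing $B$ would have to equal $B\cup\{e'\}$ with $e'<e_0$, which is impossible. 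I would also record the structural observation that every NBC base must contain $e_0$: since $y$ is a pendant vertex, $e_0$ lies on no cycle of $G'$, so for any base $B\not\ni e_0$ the set $B\cup\{e_0\}$ is a forest of size $\lfloor\alpha n\rfloor+2$, hence a truncation circuit whose $\ord$-minimum is $e_0$, forcing $B$ itself to be a broken circuit.

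With these facts in hand, every NBC base has the form $\{e_0\}\cup\{e_v:v\in I\}\cup E_B$, where $I$ is an independent set of $G$, $E_B\subseteq E$, and $|I|+|E_B|=\lfloor\alpha n\rfloor$, and its contribution to $Z$ is $\ell^{|I|}$. Grouping by $k:=|I|$, the $k=\lfloor\alpha n\rfloor$ term contributes exactly $N\ell^{\lfloor\alpha n\rfloor}$ (the lower bound), and the $k<\lfloor\alpha n\rfloor$ terms are bounded by $\sum_{k=0}^{\lfloor\alpha n\rfloor-1}\ell^k\,i_k\binom{|E|}{\lfloor\alpha n\rfloor-k}$; applying $i_k\le N$ and the same geometric-series manipulation as in \cref{thm:nbc-contract-count} with $\ell\ge 2|E|$ bounds this by $N\ell^{\lfloor\alpha n\rfloor}$, yielding the desired upper bound. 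The main delicate point, absent from the proof of \cref{thm:nbc-contract-count}, is ruling out bases $B\not\ni e_0$: without the pendant-vertex observation, such bases would contribute a term of order $N\ell^{\lfloor\alpha n\rfloor}|E|$ and break the concentration estimate.
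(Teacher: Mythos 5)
Your proof is substantially more detailed than the paper's (the paper gives the claim without proof, pointing vaguely back to the analogous claim in the proof of \cref{thm:nbc-contract-count}), and your structural observation that every NBC base must contain $e_0$ — via the pendant-vertex/truncation-circuit argument — is a genuinely useful addition: the paper implicitly uses this fact in the second equality of its reduction but never states or justifies it, and without it the claim as literally worded does not suffice to establish that $\sum_{B : B\cap E = \emptyset}\prod_{e\in B}\lambda_e = \ell^{\lfloor\alpha n\rfloor}N$. Your forward direction (the triangle $\{e_u, e_v, \{u,v\}\}$) and your treatment of truncation circuits in the converse ($B\cup\{e'\}$ would need $e'<e_0$) are both correct.

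However, your cycle-circuit case in the converse direction has a gap. You argue that every cycle $C$ of $G'$ using some $e_v$ must traverse an $E$-edge that is ``smaller under $\ord$ and absent from $B$,'' and conclude that $B$ contains no broken circuit from such a $C$. This handles cycles with two or more $E$-edges (at least one survives the removal of $\min C$ and is not in $B$), but it does \emph{not} handle the triangle $C=\{e_a, e_b, \{a,b\}\}$: here the unique $E$-edge $\{a,b\}$ \emph{is} $\min C$, so it is precisely the one removed, and ``absent from $B$'' is moot. The resulting broken circuit is $\{e_a, e_b\}$, and to see this is not contained in $B=\{e_0\}\cup\{e_v: v\in I\}$ you must explicitly invoke the independence of $I$ in $G$ (so $a$ and $b$ cannot both lie in $I$). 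This is the one place where the hypothesis that $I$ is an independent set of $G$ is actually used in the converse, and your write-up omits it.
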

So,
    \begin{align} \label{eq:reduction2}
           \sum_B \prod_{e \in B} \lambda_e  &= \sum_{B: B\cap E \neq \emptyset} \prod_{e \in B} \lambda_e +  \sum_{B: B\cap E = \emptyset} \prod_{e \in B} \lambda_e 
          \\
           &= \sum_{B: B\cap E \neq \emptyset} \prod_{e \in B} \lambda_e + \ell^{\lfloor\alpha n\rfloor} |\{S \subseteq V: S \text{ independent set of } G, |S| = \lfloor\alpha n\rfloor\}|\nonumber
    \end{align}
Define $i_k$ as the number of independent sets of size $k$ of graph $G$. We have
    \begin{align*}
\sum_{B: B\cap E \neq \emptyset} \prod_{e \in B} \lambda_e \leq \sum_{k=0}^{\lfloor\alpha n\rfloor-1} \ell^k i_k {|E| \choose \lfloor\alpha n\rfloor-k} \underset{\substack{\text{using }i_k\leq N,\\ \text{assuming }\ell\geq 2|E|}}{\leq} \ell^{\lfloor\alpha n\rfloor}N %
       \end{align*}
where the last inequality follows from the same calculations as in  \cref{eq:bound-bad-bases}. %
\end{proof}
\printbibliography
\appendix

\section{Proof of \cref{thm:NPindpset}}
     In this section we prove \cref{thm:NPindpset}. We use a reduction from the problem of computing the partition function of the Hardcore model when the fugacity is above the critical threshold. Define $\sharp\textsc{HC}( \Delta, \lambda)$ as follows: given a $\Delta$-regular graph $G = (V, E)$, compute the partition function $Z_G(\lambda) = \sum_{I} \lambda^{|I|}$, where the sum is taken over the family of independent sets $I \subseteq V$ of $G$. The critical threshold is defined as $\lambda_c (\Delta) \coloneqq \frac{(\Delta-1)^{\Delta-1}}{(\Delta-2)^\Delta}$. 
\begin{theorem} [\cite{Sly10, Sly14, GSV16}]
      The following holds for any fixed $\epsilon > 0$, integer $\Delta \geq 3$  and $\lambda > \lambda_c(\Delta)$: unless NP=RP, for any $\lambda>\lambda_c(\Delta)$ there is no polynomial-time algorithm for for approximating $\sharp\textsc{HC}( \Delta, \lambda)$ up to a $1+\epsilon$ multiplicative factor. 
\end{theorem}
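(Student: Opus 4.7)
The plan is to reduce from $\sharp\textsc{HC}(7,\lambda^*)$ for a carefully chosen activity $\lambda^* > \lambda_c(7) = 6^6/5^7$, invoking the cited inapproximability theorem of Sly, Sly-Sun, and Galanis-Stefankovic-Vigoda. The density $\alpha := 2/19$ is intended to match (a slight underestimate of) the typical hardcore density at activity $\lambda^*$ on $7$-regular graphs, so that for an amplified instance the coefficient sequence $i_k (\lambda^*)^k$ of the hardcore partition function is unimodal with mode at exactly $\lfloor \alpha N \rfloor$, where $N$ is the number of vertices of the amplified graph.

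Given a hard $7$-regular instance $G$ of $\sharp\textsc{HC}(7, \lambda^*)$ on $n$ vertices, we take $G^{(t)}$, the disjoint union of $t = \mathrm{poly}(n, \epsilon^{-1})$ copies of $G$. Then $G^{(t)}$ remains $7$-regular, has $N = tn$ vertices, and its independence polynomial factors as $Z_{G^{(t)}}(\lambda) = Z_G(\lambda)^t$. Writing $Z_{G^{(t)}}(\lambda^*) = \sum_{k=0}^{N} i_k(G^{(t)})(\lambda^*)^k$, this sum has at most $N+1$ terms, so it agrees with its largest term up to a factor of $N$. For the chosen $\lambda^*$, this largest term is at $k^* = \lfloor 2N/19\rfloor$, so a constant-factor approximation to $i_{k^*}(G^{(t)})$ gives a $\mathrm{poly}(N)$-factor approximation to $Z_{G^{(t)}}(\lambda^*) = Z_G(\lambda^*)^t$. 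Taking the $t$-th root yields a $\mathrm{poly}(N)^{1/t} \leq 1+\epsilon$ approximation to $Z_G(\lambda^*)$ when $t$ is chosen polynomially large in $\epsilon^{-1}\log N$, contradicting the hardness of $\sharp\textsc{HC}(7,\lambda^*)$.

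Two combinatorial facts must be verified: (i) $G^{(t)}$ satisfies the promise $i_k(G^{(t)}) \leq i_{k^*}(G^{(t)})$ for all $k < k^*$; and (ii) the largest term in the hardcore partition function sum is indeed at $k^*$. Both follow from a concentration argument for the sequence $i_k(G^{(t)})$. This sequence is the $t$-fold convolution of the coefficient sequence of $Z_G$, so by a central-limit-type argument it becomes strictly unimodal for large $t$, and its weighted mode (under the reweighting by $(\lambda^*)^k$) concentrates about the expected hardcore independent set size, which by design of $\lambda^*$ is $\alpha N = 2N/19$. In particular, for $t$ polynomially large the sequence is strictly increasing on $[0, k^*]$ and the weighted mode sits exactly at $k^*$, delivering both (i) and (ii).

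The main obstacle is selecting the correct $\lambda^* > \lambda_c(7)$ giving hardcore density exactly $2/19$, and confirming that the hardness of $\sharp\textsc{HC}$ survives when $\lambda^*$ is pinned to this particular value rather than being allowed to range freely above $\lambda_c(7)$. If the density at the unique threshold matching $2/19$ does not line up with a provably hard choice, we compose $G$ with disjoint fixed gadgets of known independence polynomial, so as to shift the effective mode of the coefficients to $2N/19$ without destroying $7$-regularity or the underlying $\sharp\textsc{HC}$ hardness. Carrying out this shift while maintaining regularity and the promise is the technically delicate part of the argument.
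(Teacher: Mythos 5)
The statement you were given is a theorem cited from Sly, Sly--Sun, and Galanis--\v{S}tefankovi\v{c}--Vigoda; the paper does not prove it, it uses it as a black box. Your proposal does not prove it either: your very first sentence ``invokes the cited inapproximability theorem,'' and what you then sketch is a reduction \emph{from} $\sharp\textsc{HC}$ \emph{to} $\sharp\textsc{INDEP-SET-INC}(7,2/19)$ --- that is, an attempted proof of the paper's subsequent appendix theorem (hardness of $\sharp\textsc{INDEP-SET-INC}$), not of the cited $\sharp\textsc{HC}$ hardness result itself.

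Treating the proposal as an attempt at that reduction, there is a real gap in the main construction. You amplify by taking $G^{(t)}$, the disjoint union of $t$ copies of the input $G$, and argue that the weighted coefficient sequence $i_k(G^{(t)})(\lambda^*)^k$ concentrates with mode exactly at $\lfloor 2N/19\rfloor$ ``by design of $\lambda^*$.'' But the hardcore density at activity $\lambda^*$ is a property of the graph, not of $\lambda^*$ alone; powering up $G$ cannot move its density to $2/19$, so the mode will sit wherever $G$'s own density puts it, and the required promise $i_k(G^{(t)})\leq i_{k^*}(G^{(t)})$ for $k<k^*$ will generically fail. The paper sidesteps this by \emph{not} powering $G$: it fixes $\lambda = 2/3$ and takes the disjoint union of $G$ with $r = c^2n^2/\epsilon$ copies of $K_8$. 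Since $i_k(rK_8) = \binom{r}{k}8^k$, the ratio $i_{k-j}(rK_8)/i_k(rK_8) \approx (2/3)^j$ when $k/r\approx 16/19$, which is exactly what happens at $k = \lfloor 2N/19\rfloor$ with $N = n+8r$; with $r\gg n^2$ the $K_8$ copies dominate, which both makes this ratio match $\lambda^j$ up to $e^{\pm\epsilon/2}$ and enforces the monotonicity promise in the definition of $\sharp\textsc{INDEP-SET-INC}$. Your last paragraph gestures at ``composing $G$ with disjoint fixed gadgets,'' which is the right idea, but it is precisely the unspecified matched triple (gadget $K_8$, activity $\lambda=2/3$, density $\alpha=2/19$) that makes the computation close, and your proposal leaves that part open as ``the technically delicate part.''
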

     
We give a polynomial-time algorithm that given a $e^{\pm \epsilon/2}$-approximation for  $\sharp\textsc{INDEP-SET-INC}(7, \frac{2}{19})$ (see \cref{def:INDEPSETINC}), approximates $\sharp\textsc{HC}( 7, \frac{2}{3})$ up to a $e^{\pm\epsilon}$-multiplicative error. Since $\frac{2}{3} > \lambda_c(7) = \frac{6^6}{5^7} \geq 0.6$, this  finishes the proof of \cref{thm:NPindpset}. Our reduction is a modification of Theorem 16 in \cite{DP21}. 
     \begin{theorem}
         There exists a polynomial-time algorithm that for any given $\epsilon \leq 1$, satisfies the following properties:
\begin{enumerate}
\item Given an instance $G = (V, E)$ of $\sharp\textsc{HC}( 7, \frac{2}{3})$, the algorithm constructs an instance $G' = (V', E')$ of the problem $\sharp\textsc{INDEP-SET-INC}( 7, \frac{2}{19})$
with size polynomial in $|G|$.
\item Given a  $e^{\pm\epsilon/2}$-multiplicative approximation to the number of independent sets of size $\lfloor \frac{2|V'|}{19} \rfloor$ of $G'$, a   $e^{\pm\epsilon}$-approximation of $Z_G(\frac{2}{3})$
can be computed in polynomial time.
\end{enumerate}
\end{theorem}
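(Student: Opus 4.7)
The plan is to reduce $\sharp\textsc{HC}(7, 2/3)$ to $\sharp\textsc{INDEP-SET-INC}(7, 2/19)$ by padding the input with many disjoint copies of $K_8$. Given a $7$-regular graph $G=(V,E)$ with $|V|=n$, form $G' \coloneqq G \sqcup (m \cdot K_8)$, the disjoint union of $G$ with $m$ copies of $K_8$, where $m$ is a polynomial in $n$ and $1/\epsilon$ to be chosen. Then $G'$ is again $7$-regular with $|V'| = n + 8m$, and every independent set of $G'$ decomposes uniquely as an independent set $I_G$ of $G$ together with one of the $9$ independent sets (empty or a singleton) in each copy of $K_8$. Consequently, writing $k^{*} \coloneqq \lfloor 2|V'|/19 \rfloor$,
$$ i_{k^{*}}(G') \;=\; \sum_{j=0}^{\min(k^{*},n)} i_{j}(G) \binom{m}{k^{*}-j}\, 8^{\,k^{*}-j}. $$

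The choice of the constants $2/19$ and $2/3$ is engineered so that $k^{*}/(m-k^{*}) \approx 16/3$, equivalently $\tfrac{1}{8}\cdot k^{*}/(m-k^{*}) \approx 2/3 = \lambda$. A direct estimate of the ratio $\binom{m}{k^{*}-j}/\binom{m}{k^{*}} = \prod_{i=0}^{j-1} \tfrac{k^{*}-i}{m-k^{*}+1+i}$ then gives
$$ \frac{\binom{m}{k^{*}-j}}{\binom{m}{k^{*}}}\cdot 8^{-j} \;=\; \bigl(\tfrac{2}{3}\bigr)^{j}\bigl(1\pm O(n^{2}/m)\bigr), $$
uniformly in $j\in\{0,\ldots,n\}$. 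Factoring $\binom{m}{k^{*}}8^{k^{*}}$ out of the identity above yields
$$ i_{k^{*}}(G') \;=\; \binom{m}{k^{*}}\,8^{k^{*}}\cdot Z_{G}(2/3)\cdot \bigl(1\pm O(n^{2}/m)\bigr). $$
Choosing $m \geq C n^{2}/\epsilon$ for an appropriate absolute constant $C$ bounds the error factor by $e^{\pm \epsilon/2}$. Then given any $e^{\pm\epsilon/2}$-approximation $\hat{N}$ to $i_{k^{*}}(G')$ returned by the $\sharp\textsc{INDEP-SET-INC}$ oracle, the quantity $\hat{N} / \bigl(\binom{m}{k^{*}} 8^{k^{*}}\bigr)$, which is computable exactly in polynomial time, is an $e^{\pm\epsilon}$-approximation to $Z_{G}(2/3)$.

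To legitimately invoke the oracle, I must also verify that $G'$ satisfies the side condition of $\sharp\textsc{INDEP-SET-INC}$, namely $i_{k}(G') \leq i_{k^{*}}(G')$ for all $k < k^{*}$. Using the decomposition above, the ratio $i_{k+1}(G')/i_{k}(G')$ is a weighted average of $\tfrac{8(m-k+j)}{k-j+1}$, which is monotonically increasing in $j$, hence bounded below by $\tfrac{8(m-k)}{k+1}$. This lower bound is $\geq 1$ precisely when $k \leq (8m-1)/9$, and the inequality $k^{*}-1 \leq (8m-1)/9$ holds whenever $m \geq (9n-76)/4$, which is dominated by the earlier requirement $m \geq Cn^{2}/\epsilon$.

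The main obstacle is the error control in the key approximate identity: because we sum over $j$ up to $n$, per-term multiplicative errors of size $1+O(n/m)$ compound into a total error of roughly $(1+O(n/m))^{n}$, forcing $m$ to scale at least quadratically with $n$. Everything else is routine bookkeeping: $G'$ is a disjoint union of known $7$-regular components, its size is polynomial in $n$ and $1/\epsilon$, the prefactor $\binom{m}{k^{*}} 8^{k^{*}}$ is computable exactly, and the side monotonicity condition reduces to the single ratio inequality verified above.
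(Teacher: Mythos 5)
Your proposal is correct and follows essentially the same route as the paper: pad $G$ with $m = \Theta(n^2/\epsilon)$ disjoint copies of $K_8$ to produce a $7$-regular $G'$, decompose $i_{k^*}(G')$ over the choice of independent set inside $G$, and observe that the per-term factor $\binom{m}{k^*-j}8^{-j}/\binom{m}{k^*}$ is designed to approximate $(2/3)^j$ up to $1\pm O(n/m)$, so that $i_{k^*}(G') \approx \binom{m}{k^*}8^{k^*} Z_G(2/3)$ after summing over $j \leq n$. The only cosmetic difference is in verifying the side condition $i_k(G') \leq i_{k^*}(G')$: the paper fixes an independent set $S\subseteq V$ and shows $|T_{S,k}|$ is nondecreasing in $k$ for $k\leq k^*$, then sums over $S$; you instead note that $i_{k+1}(G')/i_k(G')$ is a weighted average of the per-$j$ ratios $\tfrac{8(m-k+j)}{k+1-j}$, which are increasing in $j$ and hence bounded below by the $j=0$ term — the two arguments produce the same inequality $\tfrac{8(m-k)}{k+1}\geq 1$ and are interchangeable.
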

\begin{proof}
    
    Given a $7$-regular graph $G = (V, E)$, we define $G'$ as the disjoint union of $G$ with $r := \frac{c^2 n^2}{\epsilon}  $ copies of the complete graph $K_8$, where $n=|V|$, for some $c>1$ that we choose later. %
    For simplicity of notation, let %
    $N\coloneqq |V'|=n+8r$, $\alpha \coloneqq \frac{2}{19}$, $\lambda := \frac{2}{3}$. It is enough to show that $G'$ is an instance of $\sharp\textsc{INDEP-SET-INC}( 7, \frac{2}{19})$ and 
    \begin{align}\label{eq:hardness-hc}
        e^{-\epsilon/2} \frac{i_{\lfloor \alpha N\rfloor}\left(G^{\prime}\right)}{ { r\choose \lfloor \alpha N\rfloor} 8^{\lfloor \alpha N\rfloor}} \leq 
    Z_G(\lambda) \leq  e^{\epsilon/2} 
    \frac{i_{\lfloor \alpha N \rfloor}\left(G^{\prime}\right)}{ { r\choose \lfloor \alpha N\rfloor} 8^{\lfloor \alpha N \rfloor}},
    \end{align}
     where as usual $i_k(G)$ is the number of independent sets of size $k$ in $G$, and 
     $${ r\choose \lfloor \alpha N \rfloor} 8^{\lfloor \alpha N\rfloor} = i_{\lfloor \alpha N \rfloor}(r K_8). $$ 
      Here, $rK_8$ is a shorthand for the graph which is a disjoint union of $r$ copies of $K_8$.
    We first show that  \cref{eq:hardness-hc} holds. Note that $$i_{\lfloor \alpha N\rfloor}\left(G^{\prime}\right)=\sum_{j=0}^n i_j(G) i_{\lfloor \alpha N\rfloor-j}(r K_8)=i_{\lfloor \alpha N\rfloor}(r K_8) \sum_{j=0}^n i_j(G) \frac{i_{\lfloor \alpha N\rfloor-j}(r K_8)}{i_{\lfloor \alpha N\rfloor}(r K_8)}.$$
    Thus, to show  \cref{eq:hardness-hc}, it is enough to prove that for every $1 \leq j \leq n$, 
    \begin{align}\label{eq:hardness-hc2}
        e^{-\epsilon/2} \cdot \frac{i_{\lfloor \alpha N\rfloor-j}(r K_8)}{i_{\lfloor \alpha N\rfloor}(r K_8)} \leq \lambda^j \leq  e^{\epsilon/2}  \cdot \frac{i_{\lfloor \alpha N\rfloor-j}(r K_8)}{i_{\lfloor \alpha N\rfloor}(r K_8)} .
    \end{align}
 We can write 
\begin{align}\label{eq:hardness-hc3}
      \frac{i_{\lfloor \alpha N\rfloor-j}(r K_8)}{i_{\lfloor \alpha N\rfloor}(r K_8)}  = \frac{{ r\choose \lfloor \alpha N\rfloor -j} 8^{\lfloor \alpha N\rfloor-j}}{{ r\choose \lfloor \alpha N\rfloor} 8^{\lfloor \alpha N\rfloor} } = \frac{1}{8^j}\prod_{i=0}^{j-1}\frac{\lfloor \alpha N\rfloor -i}{r - \lfloor \alpha N\rfloor+j-i}.
\end{align}
To prove the upper bound, first note that
\begin{align}\label{eq:hardness-hc4}
    \frac{\alpha N }{r - \alpha N +j}  \underset{\alpha N \geq 8\alpha r}{\geq} \frac{ 8\alpha r}{r (1- 8\alpha) +n}\underset{\substack{n = \sqrt{\epsilon r}/c \\  \alpha = 2/19 }}{=}  \tfrac{16}{3}\Bigg(\frac{1}{1+\frac{19\sqrt{\epsilon}}{3c\sqrt{r}}}\Bigg).
\end{align}
This implies that $  \frac{\alpha N }{r - \alpha N +j} \geq 1$. So, $   \frac{\alpha N }{r - \alpha N +j}   \leq  \frac{\lfloor\alpha N \rfloor -i}{r - \lfloor\alpha N \rfloor +j - i} $ for every $i < r - \lfloor\alpha N \rfloor +j$. Thus, 

\begin{align*}\frac{1}{8^j}\cdot\prod_{i=0}^{j-1}\frac{\lfloor \alpha N\rfloor -i}{r - \lfloor \alpha N\rfloor+j-i} &\geq \frac{1}{8^j}\cdot\left(   \frac{\alpha N }{r - \alpha N +j} \right)^j  \\
&\underset{\substack{\cref{eq:hardness-hc4} \\ j \leq n = \sqrt{\epsilon r}/c }}{\geq}  \frac{1}{8^j}\cdot(\tfrac{16}{3})^j\Bigg(\frac{1}{1+\frac{19\sqrt{\epsilon}}{3c\sqrt{r}}}\Bigg)^{\sqrt{\epsilon r}/c} \geq (\tfrac{2}{3})^j e^{-\epsilon/2} = \lambda^je^{-\epsilon/2},
\end{align*}
for a large enough $c>1$.
Combining this with \cref{eq:hardness-hc3}, we get the upper bound in \cref{eq:hardness-hc2}.

    To prove the lower bound, note that 
    \begin{align*}
        \frac{1}{8^j}\cdot\prod_{i=0}^{j-1} \frac{\lfloor \alpha N\rfloor -i}{r - \lfloor \alpha N\rfloor+j-i} &\underset{j-i\geq 0}{\leq} \frac{1}{8^j}\cdot\left(\frac{\lfloor \alpha N\rfloor }{r - \lfloor \alpha N\rfloor}\right)^j \underset{\substack{ \lfloor \alpha N\rfloor 
        = \lfloor \frac{16r}{19} + \frac{2\sqrt{\epsilon r}}{19c}\rfloor}}{\leq} \frac{1}{8^j}\cdot \left(\frac{ \frac{16r}{19}(1 + \frac{\sqrt{\epsilon}}{8c\sqrt{r}})}{\frac{3r}{19}(1 - \frac{2\sqrt{\epsilon}}{3c\sqrt{r}})}\right)^j 
        \\
        &\leq (\tfrac{2}{3})^j e^{\epsilon/2}= \lambda^j\cdot e^{\epsilon/2},%
    \end{align*}
    for a large enough $c>1$.
    Combining this with \cref{eq:hardness-hc3}, the lower bound in \cref{eq:hardness-hc2}, thus \eqref{eq:hardness-hc} follows. %
    
    It remains to show that $G'$ is an instance of $\sharp\textsc{INDEP-SET-INC}( 7, \frac{2}{19})$, i.e. $i_{k} (G') \leq i_{\lfloor \alpha N\rfloor} (G')  $  for any $k<\lfloor \alpha N\rfloor $.
     For any $k<\lfloor \alpha N\rfloor$, and any independent set $S$  in the original graph $G$, let $T_{S, k}$ be the set of all independent sets of size $k$ of $G'$  whose intersection with the vertices of $G$ is $S$.  It is enough to show that there exists a constant $n_0$ such that if $n\geq n_0$,  then we have $|T_{S, k}| \leq |T'_{S, \lfloor \alpha N\rfloor}|$ for every independent set $S \subseteq V$ of $G$ and $k<\lfloor \alpha N\rfloor$. We prove a stronger statement that there exists a constant $n_0$ such that if $n \geq n_0$, then for any fixed independent set $S \subseteq V$,  $|T_{S, k}|$ is increasing as a function of $k$ for all $k \leq  \lfloor \alpha N\rfloor $. 
     It is enough to show that  $\frac{|T_{S, k}|}{|T_{S, k-1}|} \geq 1$ for any $|S| \leq k \leq  \alpha N$. Note that $|T_{S, k}| = { r\choose k-|S|} 8^{k-|S|}$. So we have
     $$\frac{|T_{S, k}|}{|T_{S, k-1}|}  = \frac{{r\choose  k-|S|} 8^{k-|S|}}{{r\choose  k-1-|S|} 8^{k-1-|S|}} = 8 \cdot \frac{r-k+|S|+1}{k-|S|} \geq 8 \cdot \frac{r-k}{k} \geq 8 \frac{\frac{3r}{19} -n}{\frac{16r}{19} + n}, $$
    where the last inequality comes from the fact that $k \leq \frac{2N}{19} = \frac{2}{19} (8r +n)  \leq   \frac{16r}{19} + n$. But since  $r = \frac{c^2n^2}{\epsilon}$, there  is a constant $n_0$ such that  for $n \geq n_0$, we have  $\frac{ \frac{3r}{19} -n}{\frac{16r}{19} + n} \geq \frac{1}{8}$. This shows that $\frac{|T_{S, k}|}{|T_{S, k-1}|} \geq 1$, which  finishes the proof. 
\end{proof}

\end{document}